\newtheorem{theorem}{Theorem}
\newtheorem{lemma}[theorem]{Lemma}
\newtheorem{definition}[theorem]{Definition}
\newtheorem{observation}[theorem]{Observation}
\newtheorem{corollary}[theorem]{Corollary}
\newtheorem{remark}[theorem]{Remark}
\newcounter{boxalgo}
\newenvironment{boxalgo}[1]%
{\refstepcounter{boxalgo}%
\protected@edef\@currentlabelname{#1}%
\begin{tcolorbox}[title={Algorithm \arabic{boxalgo}: #1.}]\begin{tabular}{l}}{\end{tabular}\end{tcolorbox}}
\Crefname{boxalgo}{Algorithm}{Algorithms}
\newenvironment{proofof}[1][\proofname] {\par\pushQED{\qed}\normalfont\topsep6\p@\@plus6\p@\relax\trivlist\item[\hskip\labelsep\bfseries#1\@addpunct{.}]\ignorespaces}{\popQED\endtrivlist\@endpefalse}
\def\eqref#1{equation~\ref{#1}}
\def\1{\bm{1}}
\def\ve{{\bm{e}}}
\def\vx{{\bm{x}}}
\def\vy{{\bm{y}}}
\def\vz{{\bm{z}}}
\def\mR{{\bm{R}}}
\def\mU{{\bm{U}}}
\def\mV{{\bm{V}}}
\DeclareMathAlphabet{\mathsfit}{\encodingdefault}{\sfdefault}{m}{sl}
\SetMathAlphabet{\mathsfit}{bold}{\encodingdefault}{\sfdefault}{bx}{n}
\title{A Descent-based method on the Duality Gap\\ for solving zero-sum games}
\author[1]{Michail Fasoulakis}
\author[2,3]{Evangelos Markakis}
\author[4]{Giorgos Roussakis}
\author[2,3]{Christodoulos Santorinaios}
\affil[1]{Royal Holloway, University of London, UK}
\affil[2]{Archimedes/Athena RC, Greece}
\affil[3]{Athens University of Economics and Business, Greece}
\affil[4]{Foundation for Research and Technology--Hellas}
\date{}
\begin{document}
\maketitle

\begin{abstract}
We focus on the design of algorithms for finding equilibria in 2-player zero-sum games. Although it is well known that such problems can be solved by a single linear program, there has been a surge of interest in recent years for simpler algorithms, motivated in part by applications in machine learning. Our work proposes such a method, inspired by the observation that the duality gap (a standard metric for evaluating convergence in min-max optimization problems) is a convex function for bilinear zero-sum games. To this end, we analyze a descent-based approach, variants of which have also been used as a subroutine in a series of algorithms for approximating Nash equilibria in general non-zero-sum games.    
In particular, we study a steepest descent approach, by finding the direction that minimises the directional derivative of the duality gap function.
Our main theoretical result is that the derived algorithms achieve a geometric decrease in the duality gap and improved complexity bounds until we reach an approximate equilibrium. Finally, we complement this with an experimental evaluation, which provides promising findings. Our algorithm is comparable with (and in some cases outperforms) some of the standard approaches for solving 0-sum games, such as OGDA (Optimistic Gradient Descent/Ascent), even with thousands of available strategies per player. 
\end{abstract}
\section{Introduction}\label{intro}

Our work focuses on the design of algorithms for finding Nash equilibria in 2-player bilinear zero-sum games. Zero-sum games have played a fundamental role both in game theory, being among the first classes of games formally studied, and in optimization, as it is easily seen that their equilibrium solutions correspond to solving a min-max optimization problem. 
Even further, solving zero-sum games is in fact equivalent to solving linear programs, as properly demonstrated in \cite{Adler13}. 

Despite the fact that a single linear program (and its dual) suffices to find a Nash equilibrium, there has been a surge of interest in recent years, for faster algorithms, motivated in part by applications in machine learning. One reason for this is that we may have very large games to solve, corresponding to LPs with thousands of variables and constraints. A second reason could be that e.g., in learning environments, the players may be using iterative algorithms that can only observe limited information, hence it would be impossible to run a single LP for the entire game. As an additional motivation, finding new algorithms for such a fundamental problem can provide insights that could be of further value and interest.

The above considerations have led to a variety of approaches and algorithms, spanning already a few decades of research. Some of the earlier works on this domain have focused purely on an optimization viewpoint. In parallel to this, significant attention has been drawn to learning-oriented algorithms, such as first-order methods. The latter class of algorithms performs gradient descent or ascent on the utility functions of the two players, and some of the proposed variants have been very successful in practice, such as the optimistic gradient and the extra gradient methods \cite{K76,P80}. Several works have focused on 
theoretical guarantees for their performance, and a standard metric used in the analysis is the duality gap. This is simply the sum of the regrets of the two players in a given profile, and therefore the goal often amounts to proving appropriate rates of decrease for the duality gap over the iterations of an algorithm.

Our work is motivated by the observation that the duality gap is a convex function for zero-sum games. This naturally gives rise to the suggestion that instead of performing gradient descent on the utility function of a player, which is not a convex function, we could apply a descent procedure directly on the duality gap. It is not straightforward that this can indeed be useful as it is not a priori clear that we can perform a descent step fast (i.e., finding the direction to move to). Nevertheless, it can form the basis for investigating new approaches for zero-sum games.

\subsection{Our Contributions}\label{contrib}
Motivated by the above discussion, we propose and analyze an optimization approach for finding approximate Nash equilibria in zero-sum games. 
Our algorithm is a descent-based method applied to the duality gap function, and is essentially an adaptation of a subroutine in the algorithms of \cite{TS08,DFSS17,DFM23} which are for general games, tailored to zero-sum games and with a different objective function. 
The method is applying a steepest descent approach, where we find in each step the direction that minimises the directional derivative of the duality gap function and move towards that. 
In \Cref{sec:alg} we provide the algorithm and our theoretical analysis. Our main result is that the derived algorithm achieves a geometric decrease in the duality gap until we reach an approximate equilibrium. This implies that the algorithm terminates after at most $O\Big( \frac{1}{\rho} \cdot \log\Big(\frac{1}{\delta} \Big ) \Big)$ iterations with a $\delta$-approximate equilibrium, where $\rho$ is a parameter, related to the computation of the directional derivative. We exhibit that the method can also be further customized and show that a different variant also converges after $O(\frac{1}{\sqrt{\delta}})$ iterations.

In \Cref{sec:experiments}, we complement our theoretical analysis with an experimental evaluation. 
Even though the method does need to solve a linear program in each iteration to find the desirable direction, this turns out to be of much smaller size on average (in terms of the number of constraints) than solving the linear program of the entire game. We compare our method against standard LP solvers, but also against state-of-the-art procedures for zero-sum games, such as Optimistic Gradient Descent-Ascent (OGDA).  
Our findings are promising and reveal that the running time is comparable to (and often outperforms) OGDA, even with thousands of strategies per player. We therefore conclude that the overall approach deserves further exploration, as there are also potential ways of accelerating its running time, discussed in \Cref{sec:experiments}. 

\subsection{Related Work}\label{related}

As already mentioned, conceptually, the works most related to ours are \cite{TS08,DFSS17,DFM23}. Although these papers do not consider zero-sum games, they do utilize a descent-based part as a starting point. The main differences with our work is that first of all, their descent is performed with respect to the maximum regret among the two players, whereas we use the duality gap function. Furthermore the descent phase is only a subroutine of their algorithms, since it does not suffice to establish guarantees for general games. Hence their focus is less on the decent phase itself and more on utilizing further procedures to produce approximate equilibria.

There is a plethora of algorithms for linear programming and zero-sum games, which is impossible to list here, but we comment on what we feel are most relevant. When focusing on optimization algorithms for zero-sum games,  \cite{hoda2010smoothing} use Nesterov's first order smoothing techniques to achieve an $\epsilon$-equilibrium in $O(1/\epsilon)$ iterations, with added benefits of simplicity and rather low computational cost per iteration. Following up on that work,  \cite{gilpin2012first} propose an iterated version of Nesterov's smoothing technique, which runs within $O(\frac{||A||}{\delta(A)}\cdot \ln(1/\epsilon)) $ iterations. However, while this is a significant improvement, the complexity depends on a condition measure $\delta(A)$, with A being the payoff matrix, not necessarily bounded by a constant. Another optimization approach that is relevant in spirit to ours is via the Nikaido-Isoda function \cite{NI55} and its variants. E.g., in \cite{RCJ19} they run a descent method on the Gradient NI function, which is convex for zero-sum games. We are not aware though of any direct connection to the duality gap function that we use here.

Apart from the optimization viewpoint, there has been great interest in designing faster learning algorithms for zero-sum games. Although this direction started already several decades ago, e.g. with the fictitious play algorithm \cite{B51,R51}, it has received significant attention more recently given the relevance to formulating GANs in deep learning \cite{GPMXWOCB14} and also other applications in machine learning. Some of the earlier and standard results in this area concern convergence {\it on average}. That is, it has been known that by using no-regret algorithms, such as the Multiplicative Weights Update (MWU) methods \cite{AHK12} the empirical average of the players' strategies over time converges to a Nash equilibrium in zero-sum games. Similarly, one could also utilize the so-called Gradient Descent/Ascent (GDA) algorithms.

Within the last decade, there has also been a great interest in algorithms attaining the more robust notion of {\it  last-iterate convergence}. This means that the strategy profile $(x_t,y_t)$, reached at iteration $t$, converges to the actual equilibrium as $ t \to \infty$. 
Negative results in \cite{BP18} and \cite{MPP18} exhibit that several no-regret algorithms such as many MWU as well as GDA variants, do not satisfy last-iterate convergence. Motivated by this, there has been a series of works on obtaining algorithms with provable last iterate convergence. The positive results that have been obtained for zero-sum games is that improved versions of Gradient Descent such as the Extra Gradient method \cite{K76} or the Optimistic Gradient method \cite{P80} attain last iterate convergence. In particular, \cite{daskalakis2018training} and \cite{DBLP:conf/aistats/LiangS19} show that the optimistic variant of GDA (referred to as OGDA) converges for zero-sum games. Analogously, OMWU (the optimistic version of MWU) also attains last iterate convergence, shown in \cite{Daskalakis2019LastIterateCZ} and further analyzed in \cite{Wei2021LinearLC}.
The rate of convergence of optimistic gradient methods in terms of the duality gap was studied in \cite{COZ22}, and was later improved to $O({1}/{t})$ in \cite{CZ23}. Further approaches with convergence guarantees have also been proposed, based on variations of the Mirror-Prox method \cite{FMPV22} as well as primal-dual hybrid gradient methods \cite{LY23}.
Finally, several of these methods have also been studied beyond zero-sum games, including among others \cite{GPD20}, where Optimistic Gradient is analyzed for more general games and \cite{DBLP:conf/aistats/DiakonikolasDJ21} where positive results are shown for a class of non-convex and non-concave problems. The picture however is more complex for general games with negative results also established in 
\cite{DSZ21}.

\section{Preliminaries}
\label{prelim}
We consider bilinear zero-sum games $(\mR,-\mR)$, with $n$ pure strategies per player, where $\mR$ is the payoff matrix of the row player. We assume $\mR\in [0,1]^{n \times n}$ without loss of generality\footnote{We can easily see that we can do scaling for any $\mR \in \mathbb{R}^{n \times n}$ s.t. $\mR\in [0,1]^{n \times n}$ keeping exactly the same Nash equilibria.}. We consider mixed strategies $\vx \in \Delta^{n-1}$ as a probability distribution (column vector) on the pure strategies of a player, with $\Delta^{n-1}$ be the $(n-1)$-dimensional simplex. We also denote by $\ve_i$ the distribution corresponding to a pure strategy $i$, with 1 in the index $i$ and zero elsewhere. A strategy profile is a pair $(\vx,\vy)$, where $\vx$ is the strategy of the row player and $\vy$ is the strategy of the column player. Under a profile $(\vx,\vy)$, the expected payoff of the row player is $\vx^\top\mR\vy$ and the expected payoff of the column player is $-\vx^\top\mR\vy$.

A pure strategy $i$ is a $\rho$-best-response strategy against $\vy$ for the row player, for $\rho \in [0,1]$, if and only if, $\ve^\top_i\mR \vy +\rho \geq \ve^\top_j\mR \vy$, for any $j$. Similarly, a pure strategy $j$ for the column player is a $\rho$-best-response strategy against some strategy $\vx$ of the row player if and only if $\vx^T\mR \ve_j \leq \vx^T\mR \ve_i + \rho$, for any $i$. 
Having these,
we define as $BR^{\rho}_r(\vy)$ the set of the $\rho$-best-response pure strategies of the row player against $\vy$ and as $BR^{\rho}_c(\vx)$ the set of the $\rho$-best-response pure strategies of the column player against $\vx$.
For $\rho=0$, we will use $BR_r(\vy)$ and $BR_c(\vx)$ for the best response sets.

\begin{definition}[Nash equilibrium \cite{N51,VN28}]
A strategy profile $(\vx^*,\vy^*)$ is a Nash equilibrium in the game $(\mR,-\mR)$, if and only if, for any $i,j,$
\begin{equation*}
v={\vx^*}^\top \mR\vy^* \geq \ve_i^\top \mR\vy^*, \text{ and, }
v={\vx^*}^\top \mR\vy^* \leq {\vx^*}^\top \mR\ve_j,
\end{equation*}
\end{definition}
\noindent where $v$ is the value of the row player (value of the game).

\begin{definition}[$\delta$-Nash equilibrium]
A strategy profile $(\vx,\vy)$ is a $\delta$-Nash equilibrium (in short, $\delta$-NE) in the game $(\mR,-\mR)$, with $\delta\in [0,1]$, if and only if, for any $i,j,$
\begin{equation*}
\vx^\top \mR\vy +\delta\geq \ve_i^\top \mR\vy, \text{ and, }
\vx^\top \mR\vy -\delta\leq \vx^\top \mR\ve_j.
\end{equation*}
\end{definition}

With these at hand, we can now define the regret functions of the players as follows.
\begin{definition}[Regret of a player]
For a game $(\mR, -\mR)$, the regret function $f_\mR: \Delta^{n-1} \times \Delta^{n-1}\rightarrow [0,1]$ of the row player under a strategy profile $(\vx,\vy)$ is
\[f_\mR(\vx,\vy) = \max_{i} \ve_i^\top \mR\vy -\vx^\top \mR\vy.\]
Similarly, for the column player the regret function is
\[
f_{-\mR}(\vx,\vy) = \max_j \vx^\top (-\mR)\ve_j +\vx^\top \mR\vy = -\min_j \vx^\top \mR\ve_j +\vx^\top \mR\vy.\]
\end{definition}

An important quantity for evaluating the performance or convergence of algorithms is the sum of the regrets, i.e., the function $V(\vx,\vy) = f_\mR(\vx,\vy) + f_{-\mR}(\vx,\vy) = \max_i \ve_i^\top \mR\vy-\min_j \vx^\top \mR\ve_j$. This is referred to in the bibliography as the \textit{duality gap} in the case of zero-sum games.

\subsection{Warmup: Duality Gap Properties}
Next, we present some known results about the \textit{duality gap} function $V(\vx,\vy)$ and its connection to Nash equilibria. For the sake of completeness, we provide proofs.

\begin{theorem}\label{thm:wu1}
The duality gap $V(\vx,\vy)$ is convex in its domain.
\end{theorem}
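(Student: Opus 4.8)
The plan is to express $V(\vx,\vy)$ as a pointwise maximum of affine functions of the pair $(\vx,\vy)$, since any such maximum is automatically convex on the convex domain $\Delta^{n-1}\times\Delta^{n-1}$. Looking at the formula
\[
V(\vx,\vy) = \max_i \ve_i^\top \mR\vy - \min_j \vx^\top \mR\ve_j = \max_i \ve_i^\top \mR\vy + \max_j \left(-\vx^\top \mR\ve_j\right),
\]
the obstacle is that each term $\max_i \ve_i^\top\mR\vy$ is convex (a max of linear functions of $\vy$) and each term $\max_j(-\vx^\top\mR\ve_j)$ is convex (a max of linear functions of $\vx$), but the bilinear cross-term $\vx^\top\mR\vy$ has cancelled out entirely. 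So $V$ is literally a sum of two convex functions, one depending only on $\vy$ and one only on $\vx$, and the sum of convex functions is convex.

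Concretely, I would proceed as follows. First, rewrite $V(\vx,\vy) = g(\vy) + h(\vx)$ where $g(\vy) := \max_i \ve_i^\top\mR\vy$ and $h(\vx) := \max_j\left(-\vx^\top\mR\ve_j\right) = -\min_j \vx^\top\mR\ve_j$. Second, observe that for each fixed $i$ the map $\vy \mapsto \ve_i^\top\mR\vy$ is linear, hence convex, so $g$ as the pointwise maximum over the finite set $i\in\{1,\dots,n\}$ of convex functions is convex; symmetrically $h$ is a pointwise maximum over $j$ of the linear (hence convex) maps $\vx\mapsto -\vx^\top\mR\ve_j$, so $h$ is convex. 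Third, conclude that $V = g + h$ is convex as a function on $\Delta^{n-1}\times\Delta^{n-1}$, since $g$ depends only on the $\vy$-coordinates and $h$ only on the $\vx$-coordinates, and sums of convex functions are convex.

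There is essentially no hard part here — the only thing to be careful about is writing down the domain correctly and noting that $V$ is jointly convex in $(\vx,\vy)$, not merely separately convex, which follows immediately from the additive separation. If one prefers to avoid even invoking "sum of convex functions," one can alternatively write $V$ directly as the single pointwise maximum
\[
V(\vx,\vy) = \max_{i,j}\left(\ve_i^\top\mR\vy - \vx^\top\mR\ve_j\right),
\]
over all pairs $(i,j)$, where each inner function is affine (in fact linear) in $(\vx,\vy)$ — here one uses that the maximum over $i$ and the maximum over $j$ decouple because the two expressions share no common variable after the cancellation — and then convexity is immediate from the standard fact that a pointwise maximum of affine functions is convex. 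I would present this second formulation as the cleanest route.
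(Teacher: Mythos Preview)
Your proof is correct but takes a different route from the paper. The paper verifies convexity directly from the definition: it fixes two profiles and a convex combination parameter $p$, then applies the subadditivity of $\max$ and superadditivity of $\min$ to check the inequality $V(p\vz_1+(1-p)\vz_2)\le pV(\vz_1)+(1-p)V(\vz_2)$ by hand. You instead recognize the structural reason behind this computation: after the bilinear term $\vx^\top\mR\vy$ cancels, $V$ separates as $g(\vy)+h(\vx)$ with each summand a pointwise maximum of linear functions, or equivalently $V(\vx,\vy)=\max_{i,j}(\ve_i^\top\mR\vy-\vx^\top\mR\ve_j)$, a finite max of affine functions of $(\vx,\vy)$. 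Your argument is shorter and makes the separability of $V$ explicit (which is itself a useful observation), whereas the paper's direct verification is more self-contained and does not rely on citing standard closure properties of convex functions. Either is perfectly adequate here.
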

\begin{proof}
    Let $(\vx_1,\vy_1)$ and $(\vx_2,\vy_2)$ be two arbitrary different strategy profiles, $p\in (0,1)$ and $(\vx,\vy) = p\cdot (\vx_1,\vy_1) +(1-p)\cdot (\vx_2,\vy_2) = (p\cdot \vx_1 + (1-p)\cdot \vx_2,p\cdot \vy_1 + (1-p)\cdot \vy_2) $ be a convex combination of them. Then, we have
\begin{align*}
    V(\vx,\vy) &= V(p\cdot \vx_1 + (1-p)\cdot \vx_2,p\cdot \vy_1 + (1-p)\cdot \vy_2) \\
    &= \max_i\ve_i^\top\mR (p\cdot \vy_1 + (1-p)\cdot \vy_2) - \min_j (p\cdot \vx_1 + (1-p)\cdot \vx_2)^\top \mR\ve_j\\ 
    &\leq p \cdot \max_{i} \ve_i^\top\mR\vy_1 + (1-p)\cdot \max_{i} \ve_i^\top\mR\vy_2\\& -p\cdot \min_j  \vx_1^\top \mR \ve_j-(1-p)\cdot \min_j \vx_2^\top \mR\ve_j\\
    &= p \cdot \max_{i} \ve_i^\top \mR\vy_1 + (1-p)\cdot \max_{i} \ve_i^\top\mR\vy_2 -p\cdot \min_j \vx_1^\top \mR\ve_j
    -(1-p)\cdot \min_j \vx_2^\top \mR\ve_j\\
    &= p \cdot V(\vx_1,\vy_1) + (1-p)\cdot V(\vx_2,\vy_2),
\end{align*}
the first inequality holds by the convexity and concavity of the $\max$ and the $\min$ function, respectively.
\end{proof}

\begin{theorem}\label{thm:wu2}
A strategy profile $(\vx^*,\vy^*)$ is a Nash equilibrium of the game $(\mR, -\mR)$, if and only if, it is a (global) minimum\footnote{Note that the set of Nash equilibria in zero-sum games and the set of optimal solutions, minimizing the duality gap are convex and identical to each other.} of the function $V(\vx, \vy)$.
\end{theorem}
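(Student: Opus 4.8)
\medskip
\noindent\textbf{Proof proposal.} The plan is to deduce the statement from the single fact that $V$ is nonnegative on its whole domain and vanishes exactly on the Nash equilibria. First I would record the elementary bound that, for every profile $(\vx,\vy)$,
\[
\max_i \ve_i^\top \mR\vy \;\geq\; \vx^\top\mR\vy \;\geq\; \min_j \vx^\top\mR\ve_j ,
\]
which holds because $\vx^\top\mR\vy$ is a convex combination (with weights the coordinates of $\vx$) of the numbers $\ve_i^\top\mR\vy$, and symmetrically $\vx^\top\mR\vy$ is a convex combination (with weights the coordinates of $\vy$) of the numbers $\vx^\top\mR\ve_j$. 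Consequently $f_\mR(\vx,\vy)\geq 0$ and $f_{-\mR}(\vx,\vy)\geq 0$, hence $V(\vx,\vy)=f_\mR(\vx,\vy)+f_{-\mR}(\vx,\vy)\geq 0$ on all of $\Delta^{n-1}\times\Delta^{n-1}$.

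Next I would characterise the zero set of $V$. Being a sum of two nonnegative terms, $V(\vx^*,\vy^*)=0$ iff $f_\mR(\vx^*,\vy^*)=0$ and $f_{-\mR}(\vx^*,\vy^*)=0$, i.e.\ $\max_i\ve_i^\top\mR\vy^*\leq{\vx^*}^\top\mR\vy^*$ and $\min_j{\vx^*}^\top\mR\ve_j\geq{\vx^*}^\top\mR\vy^*$. Together with the elementary bound above these are forced to be equalities, and they are exactly the defining inequalities of a Nash equilibrium with value $v={\vx^*}^\top\mR\vy^*$. Thus $(\vx^*,\vy^*)$ is a Nash equilibrium if and only if $V(\vx^*,\vy^*)=0$.

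It remains to tie this to global minimality. By the existence of equilibria in bilinear zero-sum games (von Neumann's minimax theorem, \cite{VN28}), there is at least one profile with $V=0$; since $V\geq 0$ everywhere, the global minimum value of $V$ is $0$ and the set of global minimisers is precisely $\{(\vx,\vy):V(\vx,\vy)=0\}$. Combining with the previous paragraph, $(\vx^*,\vy^*)$ minimises $V$ iff $V(\vx^*,\vy^*)=0$ iff $(\vx^*,\vy^*)$ is a Nash equilibrium, which is the claim. The footnote's assertion that the two sets are convex is then immediate from \Cref{thm:wu1}, since $\{V=0\}=\{V\leq 0\}$ is a sublevel set of the convex function $V$.

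I expect the only subtle point to be the direction ``global minimiser $\Rightarrow$ Nash equilibrium'': a minimiser of $V$ actually \emph{closes} the duality gap only once we know the minimum value is $0$, which is exactly where existence of an equilibrium (the minimax theorem) enters; the reverse implication and the nonnegativity bound are purely mechanical unwindings of definitions. One could alternatively avoid the external appeal to minimax by exhibiting, from any profile with $V>0$, a perturbation that strictly decreases $V$, but that is essentially the content of the descent analysis developed later in the paper, so invoking the minimax theorem here is the cleaner route.
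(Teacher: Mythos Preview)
Your proof is correct and follows essentially the same route as the paper: show $V\ge 0$ via $f_\mR,f_{-\mR}\ge 0$, show $V=0$ iff both regrets vanish iff the profile is a Nash equilibrium, and conclude. If anything, you are more careful than the paper on one point: the paper silently passes from ``$(\vx,\vy)$ is a global minimiser'' to ``$V(\vx,\vy)=0$'', whereas you make explicit that this step needs the existence of at least one equilibrium (von Neumann) to pin the minimum value at $0$.
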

\begin{proof}
Let $(\vx^*,\vy^*)$ be a Nash equilibrium, then it holds $V(\vx^*,\vy^*) = 0$ by the definition of the NE, but since the values of $V(\vx, \vy) \in [0,2]$ this implies that $(\vx^*,\vy^*)$ is a global minimum of the function in its domain. Let now a strategy profile $(\vx, \vy)$ such that $V(\vx, \vy)=0 = f_\mR(\vx, \vy) +f_{-\mR}(\vx, \vy)$, this trivially implies that $f_\mR(\vx, \vy) = 0$ and $f_{-\mR}(\vx, \vy) = 0$ since $f_\mR,f_{-\mR} \in [0,1]$, thus we have that $(\vx, \vy)$ is a NE in the zero-sum game.\end{proof}

\noindent Similarly to the previous theorem, we also have the following.

\begin{theorem}\label{thm:wu3]}
Let $(\vx, \vy)$ be a strategy profile in a zero-sum game. If $V(\vx, \vy) \leq \delta$, then $(\vx, \vy)$ is a $\delta$-NE. 
\end{theorem}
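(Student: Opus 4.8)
The plan is to unwind the definitions. Recall that $V(\vx,\vy) = f_\mR(\vx,\vy) + f_{-\mR}(\vx,\vy)$, where both regret functions are nonnegative (they take values in $[0,1]$). So if $V(\vx,\vy) \le \delta$, then since each summand is at least $0$, we immediately get $f_\mR(\vx,\vy) \le \delta$ and $f_{-\mR}(\vx,\vy) \le \delta$ individually.

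Next I would translate these two bounds into the two inequalities in the definition of a $\delta$-NE. From $f_\mR(\vx,\vy) = \max_i \ve_i^\top \mR\vy - \vx^\top \mR\vy \le \delta$ we get $\ve_i^\top \mR\vy \le \vx^\top \mR\vy + \delta$ for every $i$, i.e. $\vx^\top \mR\vy + \delta \ge \ve_i^\top \mR\vy$. Similarly, from $f_{-\mR}(\vx,\vy) = -\min_j \vx^\top \mR\ve_j + \vx^\top \mR\vy \le \delta$ we get $\vx^\top \mR\vy - \delta \le \vx^\top \mR\ve_j$ for every $j$. These are exactly the two defining conditions of a $\delta$-NE, so $(\vx,\vy)$ is a $\delta$-NE.

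There is no real obstacle here; the only thing worth noting is that one should invoke the nonnegativity of $f_\mR$ and $f_{-\mR}$ to split the bound on their sum, which is valid because for any $\vx \in \Delta^{n-1}$ we have $\max_i \ve_i^\top \mR\vy \ge \vx^\top \mR\vy$ and $\min_j \vx^\top \mR\ve_j \le \vx^\top \mR\vy$ (averaging). So the proof is essentially a two-line argument.
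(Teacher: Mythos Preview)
Your proof is correct and is exactly the argument the paper has in mind: the paper does not spell out a separate proof of this theorem but simply notes it follows ``similarly to the previous theorem,'' whose proof uses precisely the nonnegativity of $f_\mR$ and $f_{-\mR}$ to split a bound on $V=f_\mR+f_{-\mR}$ into bounds on each regret. Your write-up is a faithful (and more detailed) rendering of that.
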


\section{Descent-based Algorithms on the Duality Gap:\\ Theoretical Analysis}
\label{sec:alg}
In this section, we present our main algorithm along with some improved variants, based on a gradient-descent approach for the function $V(\vx,\vy)$ in zero-sum games. The algorithm can be seen as an adaptation\footnote{Here as the objective function we use the sum of the regrets instead of the maximum of the two regrets.} of a descent procedure that forms the initial phase of algorithms proposed for general non-zero-sum games, in \cite{TS08,DFSS17,DFM23}. 
The main idea behind the algorithm is that since the global minimum of the duality gap function $V(\vx, \vy)$ is a Nash equilibrium and the duality gap is a convex function for zero-sum bilinear games, we use a descent method based on the directional derivative of $V(\vx, \vy)$. This differs substantially from applying the more common idea of gradient descent/ascent (GDA) on the utility functions of the players, which are not convex functions. To identify the direction that minimizes the directional derivative at every step we use linear programming (albeit solving much smaller linear programs on average than the program describing the zero-sum game itself). 

To begin with, we define first the directional derivative. 
\begin{definition}\label{def:dirder}
The directional derivative of the duality gap at a point $\vz = (\vx, \vy)$, with respect to a direction $\vz^\prime = (\vx^\prime, \vy^\prime) \in \Delta^{n-1} \times \Delta^{n-1}$ is the limit, if it exists,
\[
\nabla_{\vz^\prime}V(\vz) = 
\lim_{\varepsilon \to 0} 
\frac{V\Big((1-\varepsilon) \cdot \vz +\varepsilon \cdot \vz^\prime\Big) 
  - V(\vz)}{\varepsilon}
\]
\end{definition}

We provide below a much more convenient form for the directional derivative that facilitates the remaining analysis.

\begin{theorem}
\label{thm:comb-version}
The directional derivative of the duality gap $V$ at a point $\vz = (\vx, \vy)$ with respect to a direction $\vz^\prime = (\vx^\prime,\vy^\prime) \in \Delta^{n-1} \times \Delta^{n-1} $, is given by
\[
\nabla_{\vz^\prime}V(\vz) = \max_{i\in BR_r(\vy)} \ve_i^\top \mR\vy^\prime-\min_{j\in BR_c(\vx )}(\vx^\prime)^\top \mR\ve_j - V(\vz)
\]
\end{theorem}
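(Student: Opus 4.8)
The object to analyze is
\[
\nabla_{\vz'}V(\vz) = \lim_{\varepsilon\to 0}\frac{V\big((1-\varepsilon)\vz+\varepsilon\vz'\big) - V(\vz)}{\varepsilon},
\]
where $V(\vx,\vy) = \max_i \ve_i^\top\mR\vy - \min_j \vx^\top\mR\ve_j$ splits into a part depending only on $\vy$ and a part depending only on $\vx$. The plan is to treat the two pieces separately, since the limit of a sum is the sum of the limits (once each exists). Write $g(\vy) = \max_i \ve_i^\top\mR\vy$ and $h(\vx) = \min_j \vx^\top\mR\ve_j$, so $V = g - h$, and analyze $g$ on the segment $\vy_\varepsilon = (1-\varepsilon)\vy + \varepsilon\vy'$ and $h$ on $\vx_\varepsilon = (1-\varepsilon)\vx + \varepsilon\vx'$.

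\textbf{Step 1: the max piece.} For each pure strategy $i$, the function $\varepsilon \mapsto \ve_i^\top\mR\vy_\varepsilon = \ve_i^\top\mR\vy + \varepsilon\,\ve_i^\top\mR(\vy'-\vy)$ is affine in $\varepsilon$. Hence $g(\vy_\varepsilon) = \max_i \ve_i^\top\mR\vy_\varepsilon$ is a finite maximum of affine functions of $\varepsilon$, i.e.\ a piecewise-linear convex function of $\varepsilon$ near $0$. Its right derivative at $\varepsilon = 0$ is governed exactly by those indices $i$ that are active at $\varepsilon = 0$, i.e.\ $i \in BR_r(\vy)$ (those achieving the max $\ve_i^\top\mR\vy$). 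A standard fact about maxima of affine functions (or a direct $\varepsilon$-argument: for small $\varepsilon$ the max is attained only among currently-active indices, because the strictly-dominated ones stay dominated by continuity) gives
\[
\lim_{\varepsilon\to 0^+}\frac{g(\vy_\varepsilon)-g(\vy)}{\varepsilon} = \max_{i\in BR_r(\vy)} \ve_i^\top\mR(\vy'-\vy) = \max_{i\in BR_r(\vy)}\ve_i^\top\mR\vy' - \max_i \ve_i^\top\mR\vy,
\]
where the last equality uses that $\ve_i^\top\mR\vy$ is the same value for every $i \in BR_r(\vy)$. Symmetrically, since $h$ is a minimum of affine functions hence piecewise-linear concave,
\[
\lim_{\varepsilon\to 0^+}\frac{h(\vx_\varepsilon)-h(\vx)}{\varepsilon} = \min_{j\in BR_c(\vx)}(\vx')^\top\mR\ve_j - \min_j \vx^\top\mR\ve_j.
\]

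\textbf{Step 2: assemble.} Subtracting the second limit from the first,
\[
\nabla_{\vz'}V(\vz) = \Big(\max_{i\in BR_r(\vy)}\ve_i^\top\mR\vy' - \max_i \ve_i^\top\mR\vy\Big) - \Big(\min_{j\in BR_c(\vx)}(\vx')^\top\mR\ve_j - \min_j \vx^\top\mR\ve_j\Big),
\]
and regrouping the two constant terms as $-\big(\max_i\ve_i^\top\mR\vy - \min_j\vx^\top\mR\ve_j\big) = -V(\vz)$ yields the claimed formula. I should also note that the limit in \Cref{def:dirder} is one-sided by construction (we move from $\vz$ toward $\vz'$ inside the simplex, $\varepsilon \downarrow 0$), so the right-derivative computation is exactly what is required; the two-sided limit need not exist and the theorem implicitly means the right derivative.

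\textbf{Main obstacle.} The only genuinely delicate point is justifying rigorously that, for $\varepsilon$ small enough, the active set of the $\max$ (resp.\ $\min$) can be restricted to $BR_r(\vy)$ (resp.\ $BR_c(\vx)$) — i.e.\ that indices with a strictly smaller value at $\varepsilon = 0$ cannot overtake. This is immediate from finiteness: there is a positive gap $\gamma$ between the best and the second-tier values at $\varepsilon=0$, each affine perturbation has slope bounded by $\|\mR\|_\infty$-type constants, so for $\varepsilon < \gamma/(2C)$ the ordering of active vs.\ inactive indices is preserved. Everything else is bookkeeping. I would keep the write-up short: state the piecewise-linearity, invoke the elementary active-set fact, and combine.
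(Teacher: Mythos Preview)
Your proposal is correct and follows essentially the same route as the paper: both arguments hinge on the observation that for sufficiently small $\varepsilon$ only the indices in $BR_r(\vy)$ (resp.\ $BR_c(\vx)$) can attain the max (resp.\ min), after which the limit reduces to a straightforward affine computation using that $\ve_i^\top\mR\vy$ is constant over $i\in BR_r(\vy)$. The paper makes the active/inactive split explicit via a $\max\{0,\cdot\}$ correction term that vanishes for small $\varepsilon$, whereas you phrase it as the standard right-derivative formula for a pointwise max of finitely many affine functions, but the content is identical.
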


Furthermore, by the definition of directional derivative we have the following consequence.
\begin{lemma}
\label{lem:minimum directional derivative}
Given $\delta \in [0,1]$, let $\vz = (\vx, \vy)$ be a strategy profile that is not a $\delta$-Nash equilibrium. Then
\[
\nabla_{\vz^\prime}V(\vz) < -\delta,
\]
where $\vz^\prime = (\vx^\prime,\vy^\prime) \in \Delta^{n-1} \times \Delta^{n-1}$ is a direction that minimizes the directional derivative.
\end{lemma}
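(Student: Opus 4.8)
The plan is to prove the (apparently stronger) fact that $\min_{\vz'}\nabla_{\vz'}V(\vz)<-\delta$; since the $\vz'$ in the statement attains this minimum, the conclusion is immediate. The key idea is that a Nash equilibrium of the game, used as a direction, is already good enough: moving from $\vz$ towards it decreases $V$ at a rate proportional to $V(\vz)$, and $V(\vz)$ is bounded below by $\delta$ precisely because $\vz$ fails to be a $\delta$-NE.

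Concretely, first fix a Nash equilibrium $\vz^* = (\vx^*,\vy^*)$ of $(\mR,-\mR)$, so that $V(\vz^*)=0$ by \Cref{thm:wu2}. For the direction $\vz^*$, use convexity of $V$ (\Cref{thm:wu1}): for every $\varepsilon\in(0,1)$,
\[
V\big((1-\varepsilon)\vz+\varepsilon\vz^*\big)\ \le\ (1-\varepsilon)\,V(\vz)+\varepsilon\,V(\vz^*)\ =\ (1-\varepsilon)\,V(\vz),
\]
so the difference quotient in \Cref{def:dirder} satisfies $\big(V((1-\varepsilon)\vz+\varepsilon\vz^*)-V(\vz)\big)/\varepsilon\le -V(\vz)$ for all such $\varepsilon$. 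Because $V$ restricted to the segment $[\vz,\vz^*]$ is convex, this one-sided difference quotient is monotone in $\varepsilon$, so the limit exists and $\nabla_{\vz^*}V(\vz)\le -V(\vz)$. Next, since $\vz$ is not a $\delta$-NE, the contrapositive of \Cref{thm:wu3]} gives $V(\vz)>\delta$, hence $\nabla_{\vz^*}V(\vz)\le -V(\vz)<-\delta$. Finally, as $\vz'$ minimizes the directional derivative over all directions in $\Delta^{n-1}\times\Delta^{n-1}$, we conclude $\nabla_{\vz'}V(\vz)\le\nabla_{\vz^*}V(\vz)<-\delta$. (That a minimizer exists follows from \Cref{thm:comb-version}, which exhibits $\vz'\mapsto\nabla_{\vz'}V(\vz)$ as a continuous, indeed piecewise linear, function on the compact set $\Delta^{n-1}\times\Delta^{n-1}$; alternatively one may simply take $\vz^*$ as an explicit witness and never invoke a minimizer at all.) A purely combinatorial variant of the middle step instead plugs $\vz'=\vz^*$ into \Cref{thm:comb-version} and uses $\max_{i\in BR_r(\vy)}\ve_i^\top\mR\vy^*\le v$ together with $\min_{j\in BR_c(\vx)}(\vx^*)^\top\mR\ve_j\ge v$, where $v$ is the value of the game, to reach $\nabla_{\vz^*}V(\vz)\le -V(\vz)$ again.

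I do not expect a serious obstacle. The only points needing care are (i) justifying existence of the directional-derivative limit, which is immediate from convexity once one restricts to the segment $[\vz,\vz^*]$ and notes the difference quotient is monotone in $\varepsilon$; and (ii) extracting the \emph{strict} inequality $V(\vz)>\delta$ from the negation of ``$\delta$-NE'', which is exactly the contrapositive of \Cref{thm:wu3]} (itself just the definition of a $\delta$-NE combined with $V=f_{\mR}+f_{-\mR}$ and $f_{\mR},f_{-\mR}\ge 0$).
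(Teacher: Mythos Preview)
Your proof is correct. The convexity argument together with $V(\vz^*)=0$ cleanly yields $\nabla_{\vz^*}V(\vz)\le -V(\vz)$, and the strict inequality $V(\vz)>\delta$ follows exactly as you say from the contrapositive of \Cref{thm:wu3]}. Your alternative combinatorial derivation via \Cref{thm:comb-version} and the minimax value $v$ is also valid.

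The paper takes a different route: it does not prove \Cref{lem:minimum directional derivative} directly but instead deduces it from the $\rho$-directional-derivative version (\Cref{lem:bound-approx_DD}) together with the monotonicity $\nabla_{\vz'}V(\vz)\le\nabla_{\rho,\vz'}V(\vz)$ of \Cref{lem:Bound approximation and exact derivative}. In effect, the paper first establishes the stronger statement that even the enlarged best-response sets $BR_r^\rho,BR_c^\rho$ admit a direction with approximate derivative below $-\delta$, and then specializes to $\rho=0$. Your argument bypasses the $\rho$-machinery entirely: by working with convexity on the segment $[\vz,\vz^*]$ you never need the explicit combinatorial formula at all (you only invoke \Cref{thm:comb-version} to certify existence of a minimizer, which is optional). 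The paper's detour is motivated by the fact that \Cref{lem:bound-approx_DD} is what the algorithm's analysis actually uses (\Cref{lem:decrease}), so proving that first and reading off \Cref{lem:minimum directional derivative} as a corollary avoids duplication. Your approach, on the other hand, is more self-contained and makes the role of convexity transparent; it would also generalize immediately to any convex $V$ with a zero, whereas the paper's route is tied to the specific bilinear structure through the $\rho$-best-response sets.
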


The proof of \Cref{lem:minimum directional derivative} follows by a more general result presented in \Cref{lem:bound-approx_DD} below (using also \Cref{lem:Bound approximation and exact derivative}). In a similar manner to \Cref{def:dirder}, we define now an approximate version of the directional derivative. The reason we do that will become clear later on, in order to show that the duality gap decreases from one iteration of the algorithm to the next. The main idea in the definition below is to include approximate best responses in the maximization and minimization terms involved in the statement of \Cref{thm:comb-version}. Namely, for $\rho\!>\!0$, recall the definition of $BR^{\rho}_r(\vy)$ as the set of $\rho$-best response strategies of the row player against strategy $\vy$ of the column player (and similarly
for $BR^{\rho}_c(\vx )$). 

\begin{definition}[$\rho$-directional derivative]
The $\rho$-directional derivative of the duality gap $V$ at a point $\vz = (\vx, \vy)$ with respect to a direction $\vz^\prime = (\vx^\prime,\vy^\prime)  \in \Delta^{n-1} \times \Delta^{n-1}$ is  
\[
\nabla_{\rho,\vz^\prime}V(\vz) = \max_{i\in BR^{\rho}_r(\vy)} \ve_i^\top \mR\vy^\prime-\min_{j\in BR^{\rho}_c(\vx )}(\vx^\prime)^\top \mR\ve_j - V(\vz).
\]
\end{definition}
 \begin{lemma}
 \label{lem:Bound approximation and exact derivative}
 It holds that for any direction $\vz' = (\vx^\prime,\vy^\prime) \in \Delta^{n-1} \times \Delta^{n-1}$,  and for any $\rho>0$, $$\nabla_{\vz^\prime}V(\vz) \leq \nabla_{\rho,\vz^\prime}V(\vz).$$
 \end{lemma}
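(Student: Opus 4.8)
The plan is to compare the two expressions from \Cref{thm:comb-version} and from the definition of the $\rho$-directional derivative term by term. Since the exact directional derivative is
\[
\nabla_{\vz^\prime}V(\vz) = \max_{i\in BR_r(\vy)} \ve_i^\top \mR\vy^\prime-\min_{j\in BR_c(\vx )}(\vx^\prime)^\top \mR\ve_j - V(\vz),
\]
while the $\rho$-directional derivative has the same $-V(\vz)$ term but replaces $BR_r(\vy)$ by $BR_r^\rho(\vy)$ in the $\max$ and $BR_c(\vx)$ by $BR_c^\rho(\vx)$ in the $\min$, it suffices to show that the $\max$ term can only increase and the $\min$ term can only decrease under this enlargement.

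First I would observe that for any $\rho>0$ we have the set inclusions $BR_r(\vy)\subseteq BR_r^\rho(\vy)$ and $BR_c(\vx)\subseteq BR_c^\rho(\vx)$: indeed, a $0$-best response is in particular a $\rho$-best response, directly from the definition in the Preliminaries (the inequality $\ve_i^\top\mR\vy\ge\ve_j^\top\mR\vy$ for all $j$ implies $\ve_i^\top\mR\vy+\rho\ge\ve_j^\top\mR\vy$ for all $j$, and symmetrically for the column player). Then, since a maximum over a larger set is at least as large,
\[
\max_{i\in BR_r(\vy)} \ve_i^\top \mR\vy^\prime \;\le\; \max_{i\in BR_r^\rho(\vy)} \ve_i^\top \mR\vy^\prime,
\]
and since a minimum over a larger set is at most as large,
\[
\min_{j\in BR_c(\vx)} (\vx^\prime)^\top \mR\ve_j \;\ge\; \min_{j\in BR_c^\rho(\vx)} (\vx^\prime)^\top \mR\ve_j,
\]
so that $-\min_{j\in BR_c(\vx)}(\vx^\prime)^\top\mR\ve_j \le -\min_{j\in BR_c^\rho(\vx)}(\vx^\prime)^\top\mR\ve_j$. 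Adding the two inequalities and subtracting $V(\vz)$ from both sides gives exactly $\nabla_{\vz^\prime}V(\vz)\le\nabla_{\rho,\vz^\prime}V(\vz)$.

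There is essentially no obstacle here — the result is immediate from monotonicity of $\max$/$\min$ under set inclusion, once one invokes \Cref{thm:comb-version} for the closed form of the exact directional derivative. The only point requiring a word of care is that all the relevant best-response sets are nonempty (the simplex is compact and the objective linear), so the $\max$ and $\min$ are attained and the comparison is between genuine real numbers rather than $\pm\infty$; this is automatic. One could also note for completeness that the inclusions are generally strict, but strictness is not needed for the stated inequality.
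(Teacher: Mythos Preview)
Your proposal is correct and follows essentially the same approach as the paper: both invoke the closed form from \Cref{thm:comb-version}, observe the inclusions $BR_r(\vy)\subseteq BR_r^\rho(\vy)$ and $BR_c(\vx)\subseteq BR_c^\rho(\vx)$, and conclude via monotonicity of $\max$ and $\min$ under set enlargement. The paper's version is slightly terser, but the argument is identical.
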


\begin{lemma}
\label{lem:bound-approx_DD}
Given $\delta \in [0,1]$, let $\vz = (\vx, \vy)$ be a strategy profile that is not a $\delta$-Nash equilibrium. Then
\[
\nabla_{\rho,\vz^\prime}V(\vz) < -\delta,
\]
where $\vz^\prime = (\vx^\prime,\vy^\prime) \in \Delta^{n-1} \times \Delta^{n-1}$ is a direction that minimizes the $\rho$-directional derivative.\end{lemma}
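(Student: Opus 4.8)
### Proof plan for Lemma~\ref{lem:bound-approx_DD}

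The plan is to exhibit an explicit direction $\vz^\prime$ whose $\rho$-directional derivative is strictly less than $-\delta$; since the minimizing direction can only do better, the claim follows. The natural candidate is a direction built from exact best responses. Concretely, fix $i^\star \in BR_r(\vy)$ and $j^\star \in BR_c(\vx)$ and set $\vz^\prime = (\ve_{j^\star}, \ve_{i^\star})$, i.e.\ move the row player toward a pure best response $\ve_{j^\star}$ against $\vy$ and the column player toward a pure best response $\ve_{i^\star}$ against $\vx$. (One has to be slightly careful with the roles: the row player's component of the direction is the one that gets multiplied by $\mR$ on the left and should be chosen to \emph{decrease} $\min_j (\cdot)^\top \mR \ve_j$, while the column player's component should be chosen to \emph{decrease} $\max_i \ve_i^\top \mR (\cdot)$; I would pick $\ve_{j^\star}$ minimizing $\ve_{j^\star}^\top \mR \vy$ over $j$ — which exists — and $\ve_{i^\star}$ minimizing $\max_k \ve_k^\top \mR \ve_{i^\star}$, or more simply observe that any fixed pure direction already suffices for the bound below after a short estimate.)

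The key computation is to plug this $\vz^\prime$ into the formula of the $\rho$-directional derivative and bound the two extremal terms. For the row-player term, $\max_{i \in BR^\rho_r(\vy)} \ve_i^\top \mR \vy^\prime$ with $\vy^\prime = \ve_{i^\star}$: every $i \in BR^\rho_r(\vy)$ satisfies $\ve_i^\top \mR \vy \geq \max_k \ve_k^\top \mR \vy - \rho$, and I want to relate $\ve_i^\top \mR \ve_{i^\star}$ to quantities appearing in $V(\vz)$. Similarly for the column-player term $\min_{j \in BR^\rho_c(\vx)} (\ve_{j^\star})^\top \mR \ve_j$. The cleanest route is probably: first prove the statement for the \emph{exact} directional derivative (the $\rho = 0$ version, which is exactly Lemma~\ref{lem:minimum directional derivative}) by choosing $\vz^\prime$ so that $\max_{i \in BR_r(\vy)} \ve_i^\top \mR \vy^\prime = \min_j (\text{something}) $ collapses nicely and one gets $\nabla_{\vz^\prime} V(\vz) \leq (\text{best value in a subgame}) - V(\vz)$; then use Lemma~\ref{lem:Bound approximation and exact derivative} in reverse is \emph{not} available (it goes the wrong way), so instead I would redo the estimate directly for $\rho$-best responses, absorbing the $\rho$-slack. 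Since $V(\vz) > \delta$ whenever $\vz$ is not a $\delta$-NE (contrapositive of Theorem~\ref{thm:wu3]}), and the first two terms together are at most something like $V(\vz) - V(\vz) $ plus controlled error, one concludes $\nabla_{\rho,\vz^\prime} V(\vz) \leq -V(\vz) + (\text{error}) < -\delta$ provided the error term vanishes for the chosen direction.

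More precisely, here is the estimate I expect to work. Take $\vx^\prime = \ve_{j_0}$ where $j_0 \in \argmin_j \vx^\top \mR \ve_j$ (a best response of the column player against $\vx$, so $j_0 \in BR_c(\vx) \subseteq BR^\rho_c(\vx)$), and $\vy^\prime = \ve_{i_0}$ where $i_0 \in \argmax_i \ve_i^\top \mR \vy$ (a best response of the row player against $\vy$, so $i_0 \in BR_r(\vy) \subseteq BR^\rho_r(\vy)$). Wait — that makes the max and min terms \emph{large}, not small; the correct choice flips this. I would instead take $\vy^\prime = \ve_{i_0}$ with $i_0$ chosen to minimize $\max_{i \in BR^\rho_r(\vy)} \ve_i^\top \mR \ve_{i_0}$, and symmetrically $\vx^\prime$; then bound each extremal term by its value at the \emph{true} equilibrium-like quantity. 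Alternatively — and this is the slick version — observe that $\min_{\vz^\prime} \nabla_{\rho,\vz^\prime} V(\vz) \leq \min_{\vz^\prime} \nabla_{\vz^\prime} V(\vz)$ would require the inequality in the wrong direction, so the honest approach is: bound $\min_{\vz^\prime} \nabla_{\rho,\vz^\prime}V(\vz)$ from above by evaluating at the direction pointing to an exact equilibrium $\vz^\prime = \vz^* = (\vx^*, \vy^*)$. At that direction, $\max_{i \in BR^\rho_r(\vy)} \ve_i^\top \mR \vy^* \leq \max_i \ve_i^\top \mR \vy^* = v$ and $\min_{j \in BR^\rho_c(\vx)} (\vx^*)^\top \mR \ve_j \geq \min_j (\vx^*)^\top \mR \ve_j = v$ (using that $\vx^*, \vy^*$ are equilibrium strategies), so the first two terms contribute at most $v - v = 0$, giving $\nabla_{\rho, \vz^*} V(\vz) \leq -V(\vz) < -\delta$. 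Hence the minimizing direction $\vz^\prime$ also satisfies $\nabla_{\rho,\vz^\prime} V(\vz) \leq \nabla_{\rho,\vz^*} V(\vz) < -\delta$.

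The main obstacle, and the only real subtlety, is getting the direction of every inequality right: the $\rho$-best-response \emph{restriction} on the index sets makes the $\max$ term \emph{smaller} and the $\min$ term \emph{larger} relative to the unrestricted versions, so one must check that this helps rather than hurts — here it helps, since we are upper-bounding $\nabla_{\rho,\vz^\prime}V(\vz)$ and the restriction pushes both contributions in the favorable direction. The second thing to verify carefully is that evaluating at $\vz^\prime = \vz^*$ is legitimate (it is a point of $\Delta^{n-1}\times\Delta^{n-1}$) and that the equilibrium inequalities $\ve_i^\top \mR \vy^* \leq v \leq (\vx^*)^\top \mR \ve_j$ for all $i,j$ are exactly what the definition of Nash equilibrium gives. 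Everything else is substitution into the formula of the $\rho$-directional derivative and the observation $V(\vz) > \delta$.
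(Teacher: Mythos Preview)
Your final approach is correct: evaluate the $\rho$-directional derivative at the direction $\vz' = \vz^* = (\vx^*, \vy^*)$ pointing to an exact Nash equilibrium. The equilibrium inequalities give $\max_{i \in BR^\rho_r(\vy)} \ve_i^\top \mR \vy^* \leq \max_i \ve_i^\top \mR \vy^* \leq v$ and $\min_{j \in BR^\rho_c(\vx)} (\vx^*)^\top \mR \ve_j \geq \min_j (\vx^*)^\top \mR \ve_j \geq v$, so $\nabla_{\rho,\vz^*} V(\vz) \leq v - v - V(\vz) = -V(\vz) < -\delta$ by the contrapositive of \Cref{thm:wu3]}, and the minimizing direction can only do better. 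Your sanity check that restricting the index sets to $\rho$-best responses pushes both terms in the favorable direction for the upper bound is exactly right, and it is worth stating explicitly in the write-up since this is precisely why the argument goes through uniformly in $\rho$. The earlier attempts with pure-strategy directions are dead ends and should be dropped.

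As for comparison with the paper: the proof printed in the appendix under this lemma is a copy-paste error --- it reproduces verbatim the proof of \Cref{thm:wu2} (that Nash equilibria are global minimizers of $V$) and does not address the claimed inequality at all. Your equilibrium-direction argument is the natural proof and almost certainly the intended one; it also immediately yields \Cref{lem:minimum directional derivative} as the special case $\rho = 0$, consistent with how the paper says that lemma is meant to follow.
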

\noindent The proofs of these lemmas and any other missing proof from this section are given in \Cref{app:missing-main}.

\subsection{The Main Algorithm}
We now present our algorithm. \Cref{alg:main} takes as input a game and 3 parameters, namely $\delta\in (0,1]$, which refers to the approximation guarantee that is desired, $\rho\in (0,1]$ which involves the approximation to the directional derivative, and $\epsilon$, which refers to the size of the step taken in each iteration. 
Our theoretical analysis will require $\rho$ and $\epsilon$ to be correlated. 

\begin{boxalgo}{The gradient descent-based algorithm}\label{alg:main}
\hphantom{O}INPUT: A game $(\mR, -\mR)$, an approximation parameter $\delta\in (0, 1]$ and constants $\rho, \varepsilon \in (0, 1]$.\\
OUTPUT: A $\delta$-NE strategy profile\\\\
\hphantom{OUTPUT:} Pick an arbitrary strategy profile $(\vx, \vy)$\\
\hphantom{OUTPUT:} While $V(\vx, \vy)> \delta$\\
\hphantom{OUTPUT:} \quad \quad $(\vx^\prime,\vy^\prime)= $ FindDirection$(\vx, \vy, \rho)$\\
\hphantom{OUTPUT:} \quad \quad $(\vx, \vy) = (1-\varepsilon)\cdot (\vx, \vy) + \varepsilon \cdot (\vx^\prime,\vy^\prime)$\\
\hphantom{OUTPUT: } Return $(\vx, \vy)$  
\end{boxalgo}

\begin{boxalgo}{FindDirection($\vx, \vy, \rho$)}\label{alg:find_dir}
\hphantom{O}INPUT: A strategy profile $(\vx, \vy)$ and parameter $\rho\in (0,1]$.\\
OUTPUT: The direction $(\vx^\prime,\vy^\prime)$ that minimizes the $\rho$-directional derivative.\\\\
\hphantom{OUTPUT: }Solve the linear program (w.r.t. $(\vx^\prime,\vy^\prime)$ and $\gamma$):\\
\hphantom{OUTPUT: }\quad \quad minimize  $\gamma$    \\ 
\hphantom{OUTPUT : \quad \quad mini} s.t.  $\gamma \geq (\ve_i)^\top \mR\vy^\prime -(\vx^\prime)^\top \mR\ve_j,$ \\ \hphantom{OUTPUT: \quad \quad minimize } $i \in BR^{\rho}_r(\vy), j \in BR^{\rho}_c(\vx) \text{ and } \vx^\prime,\vy^\prime \in \Delta^{n-1}$\\
\hphantom{OUTPUT: } Return $(\vx^\prime,\vy^\prime)$    
\end{boxalgo}

\begin{observation}\label{obs:rho=1}
If $\rho = 1$, then \Cref{alg:find_dir} returns an exact Nash equilibrium of the game $(\mR,-\mR)$.
\end{observation}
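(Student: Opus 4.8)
When $\rho = 1$, the sets $BR^{\rho}_r(\vy)$ and $BR^{\rho}_c(\vx)$ become the \emph{full} pure strategy sets of the two players. Indeed, since $\mR \in [0,1]^{n\times n}$, for any mixed strategy $\vy$ we have $\ve_i^\top \mR\vy \in [0,1]$ for every $i$, so $\ve_i^\top\mR\vy + 1 \geq 1 \geq \ve_j^\top\mR\vy$ for all $j$; hence every pure strategy $i$ is a $1$-best-response, i.e. $BR^{1}_r(\vy) = \{1,\dots,n\}$, and symmetrically $BR^{1}_c(\vx) = \{1,\dots,n\}$. Therefore the linear program solved in \Cref{alg:find_dir} becomes: minimize $\gamma$ subject to $\gamma \geq \ve_i^\top\mR\vy^\prime - (\vx^\prime)^\top\mR\ve_j$ for \emph{all} $i,j$, over $\vx^\prime,\vy^\prime \in \Delta^{n-1}$.

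The plan is then to observe that this is exactly the problem $\min_{\vx^\prime,\vy^\prime}\big(\max_i \ve_i^\top\mR\vy^\prime - \min_j (\vx^\prime)^\top\mR\ve_j\big)$, because at the optimum $\gamma$ equals $\max_{i,j}\big(\ve_i^\top\mR\vy^\prime - (\vx^\prime)^\top\mR\ve_j\big) = \max_i\ve_i^\top\mR\vy^\prime - \min_j(\vx^\prime)^\top\mR\ve_j$. In other words, the LP computes a strategy profile $(\vx^\prime,\vy^\prime)$ minimizing the duality gap $V$. By \Cref{thm:wu2}, any global minimizer of $V$ is a Nash equilibrium of $(\mR,-\mR)$ (the minimum value being $0$, attained by von Neumann's minimax theorem). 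Hence \Cref{alg:find_dir} with $\rho = 1$ returns an exact Nash equilibrium, which is what we wanted.

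There is essentially no hard part here; the only thing to be careful about is the elementary claim that $\rho = 1$ forces the approximate-best-response sets to be everything, which relies on the normalization $\mR \in [0,1]^{n\times n}$ adopted in \Cref{prelim}. One should also note, for completeness, that the LP is feasible and bounded (its value lies in $[0,2]$ since $V$ does), so an optimal solution exists. Given this, the identification of the LP's optimum with $\min V$ and the appeal to \Cref{thm:wu2} complete the argument.
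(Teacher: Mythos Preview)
Your proof is correct. The approach differs slightly from the paper's: you keep the single LP, identify its optimum as $\min_{\vx',\vy'} V(\vx',\vy')$, and then invoke \Cref{thm:wu2} to conclude that the minimizer is a Nash equilibrium. The paper instead first decomposes the LP of \Cref{alg:find_dir} into two decoupled LPs (one in $\vy'$ minimizing $\max_i \ve_i^\top\mR\vy'$, one in $\vx'$ maximizing $\min_j (\vx')^\top\mR\ve_j$) and observes that, once the best-response sets are all of $[n]$, these are precisely the classical primal and dual LPs for the value of a zero-sum game. Your route is more self-contained, since it appeals only to \Cref{thm:wu2} already established in the paper rather than to the external ``well-known'' LP formulation; the paper's route has the side benefit of making explicit the connection to the standard LP duality picture. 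Both arguments hinge on the same elementary first step, namely that $\rho=1$ together with $\mR\in[0,1]^{n\times n}$ forces $BR^1_r(\vy)=BR^1_c(\vx)=[n]$, which you verify carefully.
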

The proof of this simple observation is referred to \Cref{app:find_dir}. We conclude the presentation of our main algorithm with the following remark. 
\begin{remark} The choice of $\rho$ demonstrates the trade off between global optimization (Linear Programming) and the descent-based approach. In the extreme case where $\rho=1$, \Cref{obs:rho=1} shows one iteration would suffice, solving the (large) linear program of the entire zero-sum game. On the other hand, when $\rho$ is small, close to $0$, then the method solves in each iteration rather small linear programs in \Cref{alg:find_dir} (dependent on the sets $BR^{\rho}_c(\vx), BR^{\rho}_r(\vy)$). 
\end{remark}
\subsection{Proof of Correctness and Rate of Convergence}
Our main result is the following theorem.
\begin{theorem}
\label{thm:main}
For any constants $\delta, \rho \in (0, 1]$, and with $\epsilon = \rho/2$, \Cref{alg:main} returns a $\delta$-Nash equilibrium in bilinear zero-sum games after at most $O(\frac{1}{\rho \cdot \delta} \log \frac{1}{\delta})$ iterations, and with a geometric rate of convergence for the duality gap.    
\end{theorem}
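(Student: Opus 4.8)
The plan is to track how the duality gap $V$ evolves across one iteration of \Cref{alg:main}, using the $\rho$-directional derivative as the link between the discrete update $(\vx,\vy)\mapsto(1-\varepsilon)(\vx,\vy)+\varepsilon(\vx',\vy')$ and the infinitesimal descent rate. First I would fix an iteration at which $V(\vx,\vy)>\delta$, so in particular $(\vx,\vy)$ is not a $\delta$-NE, and let $(\vx',\vy')$ be the direction returned by \Cref{alg:find_dir}, i.e.\ the minimizer of the $\rho$-directional derivative. The key quantitative input is \Cref{lem:bound-approx_DD}: $\nabla_{\rho,\vz'}V(\vz) < -\delta$. The main technical step will be to upgrade this infinitesimal bound into a bound on the \emph{finite} step $\varepsilon=\rho/2$. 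Concretely, I would show that for the new point $\vz_{\mathrm{new}} = (1-\varepsilon)\vz + \varepsilon\vz'$,
\[
V(\vz_{\mathrm{new}}) \;\le\; V(\vz) + \varepsilon\cdot\nabla_{\rho,\vz'}V(\vz) \;<\; V(\vz) - \varepsilon\delta,
\]
or more precisely a multiplicative version $V(\vz_{\mathrm{new}}) \le (1-\varepsilon)\,V(\vz) + \varepsilon\cdot(\text{something}\le V(\vz)-\delta)$, which then yields $V(\vz_{\mathrm{new}}) \le V(\vz) - \varepsilon\delta$. Unwinding the definition of $V$ on the line segment, $V(\vz_{\mathrm{new}}) = \max_i \ve_i^\top\mR\big((1-\varepsilon)\vy+\varepsilon\vy'\big) - \min_j \big((1-\varepsilon)\vx+\varepsilon\vx'\big)^\top\mR\ve_j$; I would bound the max by splitting according to whether the maximizing $i$ lies in $BR^\rho_r(\vy)$ or not, and symmetrically for the min. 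For $i\in BR^\rho_r(\vy)$ the contribution is controlled directly by the LP objective $\gamma$ (hence by $\nabla_{\rho,\vz'}V(\vz)+V(\vz)$); for $i\notin BR^\rho_r(\vy)$, such an $i$ is "$\rho$-suboptimal" against $\vy$, and the $(1-\varepsilon)$ weight on $\vy$ costs at least $\rho$ in that coordinate while the $\varepsilon$ weight on $\vy'$ can gain at most $\varepsilon\cdot 1$ — so choosing $\varepsilon=\rho/2$ makes this branch strictly worse than staying inside $BR^\rho_r(\vy)$, so it never governs the maximum. This is the heart of why $\rho$ and $\varepsilon$ must be correlated, and I expect it to be the main obstacle: getting the case analysis on $BR^\rho$ versus its complement exactly right, with the constants lining up so that $\varepsilon=\rho/2$ suffices.

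Granting the per-iteration decrease, I would then assemble the convergence count. Each iteration with $V(\vx,\vy)>\delta$ strictly decreases $V$, and combined with the multiplicative structure of the bound I expect something of the form $V(\vz_{\mathrm{new}}) \le (1-c\varepsilon)\,V(\vz)$ whenever $V(\vz)>\delta$ — here using $V(\vz) - \varepsilon\delta \le (1 - \varepsilon\delta/V(\vz))V(\vz) \le (1-\varepsilon\delta/2)V(\vz)$ since $V\le 2$. That is the "geometric rate of convergence for the duality gap" claimed in the statement. Starting from $V_0\le 2$ and iterating, after $T$ steps $V_T \le 2(1-\varepsilon\delta/2)^T \le 2\exp(-T\varepsilon\delta/2)$; this drops below $\delta$ once $T = \Omega\big(\frac{1}{\varepsilon\delta}\log\frac{1}{\delta}\big) = \Omega\big(\frac{1}{\rho\delta}\log\frac{1}{\delta}\big)$, using $\varepsilon=\rho/2$. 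That matches the advertised $O\big(\frac{1}{\rho\delta}\log\frac{1}{\delta}\big)$ bound.

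Finally, for correctness of the output: the while-loop exits only when $V(\vx,\vy)\le\delta$, and by \Cref{thm:wu3]} this means the returned profile is a $\delta$-NE, so partial correctness follows immediately; termination follows from the iteration bound above. A minor point I would also address is that \Cref{alg:find_dir} is well-defined — the LP is feasible (any $(\vx',\vy')\in\Delta^{n-1}\times\Delta^{n-1}$ with $\gamma$ large enough works) and bounded below (by $-1$, say, since entries of $\mR$ lie in $[0,1]$) — and that its optimum indeed equals $\min_{\vz'}\nabla_{\rho,\vz'}V(\vz) + V(\vz)$, which is essentially the content of how the LP was written down. With these pieces in place the theorem follows.
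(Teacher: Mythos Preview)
Your proposal is correct and follows essentially the same route as the paper: the case split on $BR^\rho$ versus its complement (with $\varepsilon=\rho/2$ ensuring the complement never governs the maximum/minimum) is exactly \Cref{lem:zero}, your per-step additive decrease $V(\vz_{\mathrm{new}})\le V(\vz)-\varepsilon\delta$ is \Cref{lem:decrease}, and your multiplicative conversion via $V\le 2$ is \Cref{cor:decrease}. The only minor slip is that the ``loss'' on the $(1-\varepsilon)$ part is $(1-\varepsilon)\rho$ rather than $\rho$, but this still yields a nonpositive difference when $\varepsilon\le\rho/2$, as you anticipated when flagging the constants as the main thing to check.
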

In order to prove this theorem, we will start first with the following auxiliary lemma.
\begin{lemma}
\label{lem:zero}
If $\varepsilon \leq \frac{\rho}{2}$, then it holds that 
\[
\max \Big\{0, \max_{i\in \overline{BR^{\rho}_r(\vy)}} \ve_i^\top \mR\Big((1-\varepsilon)\cdot \vy +\varepsilon \cdot \vy^\prime\Big)-\max_{i\in BR^{\rho}_r(\vy)} \ve_i^\top \mR\Big((1-\varepsilon) \cdot \vy +\varepsilon \cdot  \vy^\prime\Big) \Big\} =0.
\]
Similarly, for the column player, it holds that
\[
\max \Big\{0, -\min_{j\in \overline{BR^{\rho}_c(\vx)}}\Big((1-\varepsilon)\cdot \vx +\varepsilon \cdot \vx^\prime\Big)^\top  R\ve_j +\min_{j\in BR^{\rho}_c(\vx)} \Big((1-\varepsilon)\cdot \vx +\varepsilon \cdot \vx^\prime\Big)^\top  R\ve_j \Big\}=0.
\]
\end{lemma}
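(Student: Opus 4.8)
The plan is to show that each of the two maxima is bounded above by $0$, which reduces to showing that for every $i \notin BR^{\rho}_r(\vy)$ and every $k \in BR^{\rho}_r(\vy)$,
\[
\ve_i^\top \mR\big((1-\varepsilon)\vy + \varepsilon \vy^\prime\big) \le \ve_k^\top \mR\big((1-\varepsilon)\vy + \varepsilon \vy^\prime\big),
\]
so that the inner max over the complement set never exceeds the max over $BR^{\rho}_r(\vy)$. It suffices to pick a convenient witness $k$ on the right-hand side; the natural choice is $k \in BR_r(\vy)$, i.e.\ an exact best response to $\vy$ (which is automatically in $BR^{\rho}_r(\vy)$). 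I would split the difference $\ve_i^\top \mR\big((1-\varepsilon)\vy + \varepsilon \vy^\prime\big) - \ve_k^\top \mR\big((1-\varepsilon)\vy + \varepsilon \vy^\prime\big)$ into a $\vy$-part and a $\vy^\prime$-part by linearity.

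First I would handle the $\vy$-part. Since $i \notin BR^{\rho}_r(\vy)$, by definition of a $\rho$-best response we have $\ve_i^\top \mR \vy < \ve_j^\top \mR \vy - \rho$ for every $j$, in particular for our exact best response $k$; hence $(1-\varepsilon)(\ve_i^\top \mR\vy - \ve_k^\top \mR\vy) < -(1-\varepsilon)\rho$. Next, the $\vy^\prime$-part: both $\ve_i^\top \mR \vy^\prime$ and $\ve_k^\top \mR \vy^\prime$ lie in $[0,1]$ because $\mR \in [0,1]^{n\times n}$ and $\vy^\prime \in \Delta^{n-1}$, so $\varepsilon(\ve_i^\top \mR \vy^\prime - \ve_k^\top \mR \vy^\prime) \le \varepsilon$. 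Adding the two bounds gives a total of at most $\varepsilon - (1-\varepsilon)\rho$, and I would check that the hypothesis $\varepsilon \le \rho/2$ makes this $\le 0$: indeed $\varepsilon - (1-\varepsilon)\rho \le \rho/2 - (1 - \rho/2)\rho = \rho/2 - \rho + \rho^2/2 = -\rho/2 + \rho^2/2 = -\tfrac{\rho}{2}(1-\rho) \le 0$ since $\rho \le 1$. Therefore every term $\ve_i^\top \mR\big((1-\varepsilon)\vy + \varepsilon \vy^\prime\big)$ with $i$ outside the $\rho$-best-response set is dominated, so the bracketed quantity is $\le 0$, and taking max with $0$ yields exactly $0$. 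The column-player statement follows by the symmetric argument, using that $j \notin BR^{\rho}_c(\vx)$ means $\vx^\top \mR \ve_j > \vx^\top \mR \ve_i + \rho$ for all $i$ (so the min over the complement exceeds the min over $BR^{\rho}_c(\vx)$), and again bounding the $\vx^\prime$-contribution by $\varepsilon$ in absolute value.

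The only mildly delicate point — and the step I would be most careful about — is making sure the witness index is handled uniformly: I need a single $k$ that is a valid $\rho$-best response (so it is legitimately in the set over which the right-hand max is taken) and against which the strict $\rho$-gap inequality for $i$ holds. Choosing $k \in BR_r(\vy)$ works for both purposes, since $BR_r(\vy) \subseteq BR^{\rho}_r(\vy)$ and the $\rho$-best-response condition for $i$ is stated against \emph{all} $j$. After that, everything is just the two-term estimate above plus the arithmetic check $\varepsilon \le \rho/2 \Rightarrow \varepsilon - (1-\varepsilon)\rho \le 0$, which is routine.
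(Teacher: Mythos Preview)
Your proof is correct and follows essentially the same approach as the paper: linearize the perturbed payoffs into a $\vy$-part (where the $\rho$-gap kicks in) and a $\vy^\prime$-part (bounded via $\mR\in[0,1]^{n\times n}$), then check arithmetic. The paper works at the level of the two maxima directly and arrives at the slightly looser bound $-\rho+2\varepsilon$, whereas your witness argument with a fixed $k\in BR_r(\vy)$ gives $\varepsilon-(1-\varepsilon)\rho$; both suffice under $\varepsilon\le\rho/2$. One small phrasing slip to fix: it is not true that $i\notin BR^{\rho}_r(\vy)$ implies $\ve_i^\top\mR\vy<\ve_j^\top\mR\vy-\rho$ for \emph{every} $j$ (the negation only gives \emph{some} $j$), but your actual use---for $k$ an exact best response---is valid since $\ve_k^\top\mR\vy=\max_j\ve_j^\top\mR\vy$.
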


We can now establish that the duality gap decreases geometrically, as long as we have not yet found a $\delta$-approximate equilibrium. We first show an additive decrease.

\begin{lemma}
\label{lem:decrease}
    Let $\epsilon\leq \frac{\rho}{2}$ and suppose that after $t$ iterations we are at a profile $(\vx^t, \vy^t)$, which is not a $\delta$-Nash equilibrium. Then,  
    \begin{equation*}
V(\vx^{t+1}, \vy^{t+1}) \leq  V(\vx^t, \vy^t) - \epsilon\cdot\delta
    \end{equation*}
where $(\vx^{t+1},\vy^{t+1})$ is the strategy profile at iteration $t+1$.
\end{lemma}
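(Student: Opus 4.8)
\textbf{Proof plan for Lemma \ref{lem:decrease}.}

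The plan is to unfold the definition of $V$ at the new profile $(\vx^{t+1},\vy^{t+1}) = (1-\varepsilon)\cdot(\vx^t,\vy^t) + \varepsilon\cdot(\vx'^t,\vy'^t)$ and split each of the two maximization/minimization terms according to whether the maximizing index lies in the $\rho$-best-response set or its complement. Concretely, write $\max_i \ve_i^\top \mR\vy^{t+1}$ as the maximum of two quantities: the max over $i\in BR^\rho_r(\vy^t)$ and the max over $i\in\overline{BR^\rho_r(\vy^t)}$. Lemma \ref{lem:zero} (with $\varepsilon\le\rho/2$) tells us that the contribution of the complement set can never exceed that of the $\rho$-best-response set, so $\max_i \ve_i^\top \mR\vy^{t+1} = \max_{i\in BR^\rho_r(\vy^t)} \ve_i^\top \mR\vy^{t+1}$. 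The symmetric statement in Lemma \ref{lem:zero} handles the column-player term, giving $\min_j (\vx^{t+1})^\top\mR\ve_j = \min_{j\in BR^\rho_c(\vx^t)}(\vx^{t+1})^\top\mR\ve_j$. Hence $V(\vx^{t+1},\vy^{t+1})$ equals the same expression but with both extrema restricted to the respective $\rho$-best-response sets at time $t$.

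Next I would exploit linearity in $\varepsilon$: for any fixed $i$, $\ve_i^\top\mR\vy^{t+1} = (1-\varepsilon)\ve_i^\top\mR\vy^t + \varepsilon\,\ve_i^\top\mR\vy'^t$, and taking the max over a fixed index set, the max of a convex combination is at most the convex combination of the maxes (this is just convexity of $\max$), so $\max_{i\in BR^\rho_r(\vy^t)}\ve_i^\top\mR\vy^{t+1} \le (1-\varepsilon)\max_{i\in BR^\rho_r(\vy^t)}\ve_i^\top\mR\vy^t + \varepsilon\max_{i\in BR^\rho_r(\vy^t)}\ve_i^\top\mR\vy'^t$. Since every $i\in BR^\rho_r(\vy^t)$ is in particular an exact best response up to $\rho$, but more importantly $BR_r(\vy^t)\subseteq BR^\rho_r(\vy^t)$ and for $i\in BR^\rho_r(\vy^t)$ we have $\ve_i^\top\mR\vy^t \ge \max_k\ve_k^\top\mR\vy^t - \rho$... actually the cleaner route: note $\max_{i\in BR^\rho_r(\vy^t)}\ve_i^\top\mR\vy^t = \max_i \ve_i^\top\mR\vy^t$ because the overall maximizer is itself in $BR^\rho_r(\vy^t)$. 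Doing the analogous manipulation for the column term (with a sign flip, using concavity of $\min$), and recombining, I get
\[
V(\vx^{t+1},\vy^{t+1}) \le (1-\varepsilon)\,V(\vx^t,\vy^t) + \varepsilon\Big(\max_{i\in BR^\rho_r(\vy^t)}\ve_i^\top\mR\vy'^t - \min_{j\in BR^\rho_c(\vx^t)}(\vx'^t)^\top\mR\ve_j\Big).
\]
The parenthesized quantity is exactly $\nabla_{\rho,\vz'^t}V(\vz^t) + V(\vz^t)$ by the definition of the $\rho$-directional derivative, so the right-hand side collapses to $V(\vx^t,\vy^t) + \varepsilon\cdot\nabla_{\rho,\vz'^t}V(\vz^t)$.

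Finally I would invoke Lemma \ref{lem:bound-approx_DD}: since $(\vx'^t,\vy'^t)$ is the output of \Cref{alg:find_dir}, it is the direction minimizing the $\rho$-directional derivative, and since $(\vx^t,\vy^t)$ is not a $\delta$-NE, that minimum is strictly less than $-\delta$. Substituting $\nabla_{\rho,\vz'^t}V(\vz^t) < -\delta$ yields $V(\vx^{t+1},\vy^{t+1}) \le V(\vx^t,\vy^t) - \varepsilon\delta$, as claimed. The main obstacle — really the only non-bookkeeping point — is the first step: being careful that restricting the max/min to the time-$t$ $\rho$-best-response sets is legitimate at the \emph{new} profile $(\vx^{t+1},\vy^{t+1})$, which is precisely what Lemma \ref{lem:zero} is engineered to guarantee and is the reason the hypothesis $\varepsilon\le\rho/2$ appears. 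Everything after that is convexity of $\max$, concavity of $\min$, and matching terms against the definition of $\nabla_{\rho,\vz'}V$.
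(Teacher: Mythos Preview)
Your proposal is correct and follows essentially the same approach as the paper's proof: invoke \Cref{lem:zero} to restrict both extrema to the $\rho$-best-response sets at time $t$, split via convexity of $\max$ and concavity of $\min$, recognize the resulting expression as $V(\vz^t)+\varepsilon\cdot\nabla_{\rho,\vz'^t}V(\vz^t)$, and finish with \Cref{lem:bound-approx_DD}. Your observation that $\max_{i\in BR^\rho_r(\vy^t)}\ve_i^\top\mR\vy^t = \max_i \ve_i^\top\mR\vy^t$ because the global maximizer lies in $BR^\rho_r(\vy^t)$ is exactly the point the paper uses implicitly when it replaces the restricted max by the unrestricted one in the $(1-\varepsilon)$ term.
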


\begin{proof}
To shorten notation, let $\vx^t=\vx,\vy^t=\vy, \vx^{\prime t} = \vx^\prime$, $\vy^{\prime t}\!=\!\vy^\prime, \vz^t=(\vx, \vy),\vz^{t+1}=(\vx^{t+1}, \vy^{t+1})$. Then we have
$$(\vx^{t+1}, \vy^{t+1})  = ((1-\varepsilon)\cdot \vx + \varepsilon\cdot\vx^\prime, (1-\varepsilon)\cdot \vy + \varepsilon\cdot\vy^\prime ). $$
Similar to \Cref{Maximum analysis} in \Cref{app:dr-form}, we have that  
\[
\max_i \ve_i^\top \mR\vy^{t+1} = \max_{i\in BR^{\rho}_r(\vy)} \ve_i^\top \mR\vy^{t+1}
+ \max \Big\{0, \max_{i\in \overline{BR^{\rho}_r(\vy)}} \ve_i^\top \mR\vy^{t+1} - \max_{i\in BR^{\rho}_r(\vy)} \ve_i^\top \mR\vy^{t+1} \Big\}.
\]
Note that since $\varepsilon\leq \frac{\rho}{2}$, \Cref{lem:zero} applies and zeroes out the last term. Respectively, we obtain that \[\min_j (\vx^{t+1})^\top\mR\ve_j = \min_{j\in {BR^{\rho}_c(\vx)}} (\vx^{t+1})^\top\mR\ve_j\]
Hence, 
\begin{align*}
\!V(\vz^{t+1})
&= \max_i \ve_i^\top \mR\vy^{t+1} - \min_j (\vx^{t+1})^\top\mR\ve_j \\
&= \max_{i\in BR^{\rho}_r(\vy)} \ve_i^\top \mR\vy^{t+1} - \min_{j\in {BR^{\rho}_c(\vx)}} (\vx^{t+1})^\top\mR\ve_j \\
&= \hphantom{-} \max_{i\in {BR^{\rho}_r(\vy)}} \ve_i^\top \mR\Big((1-\varepsilon) \cdot \vy +\varepsilon \cdot \vy^\prime\Big)
-\min_{j\in {BR^{\rho}_c(\vx)}} \Big((1-\varepsilon)\cdot \vx +\varepsilon \cdot \vx^\prime\Big)^\top  \mR\ve_j \\
&\leq \hphantom{-}(1-\varepsilon) \max_{i} \ve_i^\top \mR \vy
+\varepsilon \max_{i\in {BR^{\rho}_r(\vy)}} \ve_i^\top \mR \vy^\prime -(1-\varepsilon) \min_{j} \vx^\top  \mR\ve_j - \varepsilon\!\min_{j\in {BR^{\rho}_c(\vx)}} ( \vx^\prime )^\top  \mR\ve_j\\
&= \max_{i} \ve_i^\top \mR \vy-\min_{j} \vx^\top  \mR\ve_j
 + \varepsilon \Big(\max_{i\in {BR^{\rho}_r(\vy)}} \ve_i^\top \mR \vy^\prime -  \min_{j\in {BR^{\rho}_c(\vx)}} ( \vx^\prime )^\top  \mR\ve_j
-\max_{i} \ve_i^\top \mR \vy + \min_{j} \vx^\top  \mR\ve_j \Big)\\
&=  V(\vz^t) + \varepsilon \cdot \nabla_{\rho,\vz^\prime}V(\vz^t) 
< V(\vz^t) - \varepsilon \cdot \delta,
\end{align*}
where the last inequality follows from \Cref{lem:bound-approx_DD}.
\end{proof}

The next step is to turn the additive decrease of \Cref{lem:decrease} into a multiplicative decrease.
\begin{corollary}
\label{cor:decrease}
For $\epsilon = \rho/2$, we have that
\begin{equation*}
   V(\vx^{t+1}, \vy^{t+1}) \leq \Big(1-\frac{\rho \cdot \delta}{4}\Big) \cdot  V(\vx^t, \vy^t)
    \end{equation*}
\end{corollary}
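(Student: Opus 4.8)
The plan is to deduce Corollary~\ref{cor:decrease} directly from Lemma~\ref{lem:decrease} by combining the additive decrease with an upper bound on the duality gap. First I would recall that Lemma~\ref{lem:decrease} applies precisely when $(\vx^t,\vy^t)$ is not a $\delta$-Nash equilibrium, which (by Theorem~\ref{thm:wu3]}, contrapositive) is exactly the situation inside the \texttt{while} loop of \Cref{alg:main}; so we may assume $V(\vx^t,\vy^t) > \delta$. With $\epsilon = \rho/2$, Lemma~\ref{lem:decrease} gives
\[
V(\vx^{t+1},\vy^{t+1}) \le V(\vx^t,\vy^t) - \epsilon\cdot\delta = V(\vx^t,\vy^t) - \frac{\rho\cdot\delta}{2}.
\]

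The remaining step is to convert the additive term $\tfrac{\rho\delta}{2}$ into a multiplicative factor times $V(\vx^t,\vy^t)$. The idea is that $V(\vx^t,\vy^t)$ is bounded above: since $V$ takes values in $[0,2]$ (as noted in the proof of Theorem~\ref{thm:wu2}), we have $V(\vx^t,\vy^t) \le 2$, hence $\delta \ge \delta \cdot \frac{V(\vx^t,\vy^t)}{2}$, and therefore
\[
\frac{\rho\cdot\delta}{2} \;\ge\; \frac{\rho\cdot\delta}{2}\cdot\frac{V(\vx^t,\vy^t)}{2} \;=\; \frac{\rho\cdot\delta}{4}\cdot V(\vx^t,\vy^t).
\]
Substituting this into the additive bound yields
\[
V(\vx^{t+1},\vy^{t+1}) \le V(\vx^t,\vy^t) - \frac{\rho\cdot\delta}{4}\cdot V(\vx^t,\vy^t) = \Big(1 - \frac{\rho\cdot\delta}{4}\Big)\cdot V(\vx^t,\vy^t),
\]
which is exactly the claimed inequality.

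There is essentially no obstacle here — the only point requiring minor care is the range of $V$. One should confirm $V(\vx,\vy) \le 2$ on the whole domain: indeed $\ve_i^\top\mR\vy \le 1$ and $\vx^\top\mR\ve_j \ge 0$ since $\mR \in [0,1]^{n\times n}$ and $\vx,\vy$ are probability distributions, so $V(\vx,\vy) = \max_i \ve_i^\top\mR\vy - \min_j \vx^\top\mR\ve_j \le 1 - 0 = 1 \le 2$ (in fact the tighter bound $V \le 1$ holds, but $V \le 2$ suffices and is what the earlier proof invoked). The argument is a one-line manipulation once Lemma~\ref{lem:decrease} is in hand; the ``hard part,'' such as it is, was already done in establishing that lemma via the $\rho$-directional derivative machinery. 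I would also remark in passing that iterating this corollary over the $t$ iterations needed to drive $V$ below $\delta$ gives the $O\!\big(\tfrac{1}{\rho\delta}\log\tfrac{1}{\delta}\big)$ bound of Theorem~\ref{thm:main}, since starting from $V \le 2$ we need $(1-\tfrac{\rho\delta}{4})^T \cdot 2 \le \delta$, i.e. $T = O\!\big(\tfrac{1}{\rho\delta}\log\tfrac{1}{\delta}\big)$ using $\log(1-x) \le -x$.
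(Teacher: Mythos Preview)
Your proof is correct and follows essentially the same approach as the paper: apply Lemma~\ref{lem:decrease} to get the additive decrease $V(\vz^{t+1}) \le V(\vz^t) - \tfrac{\rho\delta}{2}$, then use the bound $V(\vz^t) \le 2$ to convert this into the multiplicative factor $(1 - \tfrac{\rho\delta}{4})$. The paper phrases the conversion as $V(\vz^{t+1}) \le (1-c)V(\vz^t)$ with $c = \tfrac{\varepsilon\delta}{V(\vz^t)} \ge \tfrac{\rho\delta}{4}$, but this is the same manipulation you perform.
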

\begin{proof}
Using \Cref{lem:decrease}, we get that 
$V(\vz^{t+1}) \le (1-c)\cdot V(\vz^t)$
with $c = \frac{\varepsilon \cdot \delta}{V(\vz^t)} \geq  \frac{\rho \cdot \delta}{4} $, since $V(\vx, \vy) \leq 2$ for any profile, and $\varepsilon = \frac{\rho}{2}$. 
\end{proof}

Finally, we can complete the proof of our main theorem.

\begin{proofof}[Proof of {\hypersetup{linkcolor=black}\Cref{thm:main}}]
We have already proved the geometric decrease of the duality gap, for constant $\rho$ and $\delta$. Hence, the algorithm eventually will satisfy that the duality gap is at most $\delta$ and will terminate with a $\delta$-NE. It remains to bound the number of iterations that are needed. 
Suppose that the algorithm terminates after $t$ iterations, with profile 
$(\vx^t,\vy^t)$. 
By repeatedly applying \Cref{cor:decrease}, we have that
\[ V(\vx^t,\vy^t) \le  (1-c)^t \cdot V(\vx^0,\vy^0) \]
with $c = \frac{\rho\cdot \delta}{4}$.
In order to ensure that $V(\vx^t, \vy^t)\leq \delta$, it suffices to have that $2 \cdot (1-c)^t \leq \delta$, since $V(\vx^0,\vy^0) \leq 2$.
\[
2(1-c)^t \le \delta \implies t \ge \frac{\log \frac{2}{\delta}}{\log\frac{1}{1-c}} \implies t\ge  \frac{1-c}{c}\log \frac{2}{\delta}
\]
where the last inequality holds due to $\log x \le x-1$, for $x\geq 1$. Since $\frac{1-c}{c} =O(\frac{1}{c})$, the proof is completed by substituting the value of $c$. \end{proofof}

\subsection{Decaying Schedule Speedups}

In this section, we present a different implementation of our main approach, which results in an improved analysis. The idea is to gradually decay $\delta$ and use it to bound $c$, instead of the more coarse approximation of $V(\vx, \vy) \le 2$, that we used in the proof of \Cref{thm:main}.
This is presented as \Cref{alg:d_speed}.

\begin{boxalgo}{Decaying Delta Speedup}\label{alg:d_speed}
\hphantom{O}INPUT: A game $(\mR, -\mR)$, an approximation parameter $\delta \in (0,1]$ and a constant $\rho\in (0, 1]$.\\
OUTPUT: A $\delta$-NE strategy profile.\\\\
\hphantom{OUTPUT:} Pick an arbitrary strategy profile $(\vx, \vy)$\\
\hphantom{OUTPUT:} Set $i = 0,\delta_0 = 1$ and $ \varepsilon = \frac{\rho}{2}$\\
\hphantom{OUTPUT:} While TRUE\\
\hphantom{OUTPUT:} \quad \quad Set $i = i+1$ and $\delta_i = \delta_{i-1} / 2$\\
\hphantom{OUTPUT:} \quad \quad Update $(\vx, \vy)$ via Algorithm 1 $\Big((\mR,-\mR), \delta_i, \rho, \varepsilon \Big)$\\
\hphantom{OUTPUT:} \quad \quad If $\delta_i \le \delta:$ break
\\
\hphantom{OUTPUT: } Return $(\vx, \vy)$\\    
\end{boxalgo}

\begin{theorem}\label{thm:decaydelta}
     \Cref{alg:d_speed} maintains a geometric decrease rate in the duality gap and reaches a $\delta$-NE after at most $O\Big( \frac{1}{\rho} \cdot \log\Big(\frac{1}{\delta} \Big ) \Big)$ iterations.
\end{theorem}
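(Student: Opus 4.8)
The plan is to analyze \Cref{alg:d_speed} as a sequence of \emph{phases} indexed by $i=1,2,\dots$, where phase $i$ is a single invocation of \Cref{alg:main} with target accuracy $\delta_i=2^{-i}$, step size $\varepsilon=\rho/2$, and starting point the profile inherited from phase $i-1$. Since the loop halts at the first $i^*$ with $\delta_{i^*}\le\delta$, there are $i^*=O(\log(1/\delta))$ phases, and because phase $i^*$ exits only once $V(\vx,\vy)\le\delta_{i^*}\le\delta$, the returned profile is a $\delta$-NE. The whole task then reduces to showing that \emph{each individual phase} runs for only $O(1/\rho)$ iterations; summing over the $O(\log(1/\delta))$ phases gives the claimed $O\!\big(\tfrac1\rho\log\tfrac1\delta\big)$ bound.

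To bound the length of phase $i$, the key invariant I would record is that the duality gap \emph{at the start of phase $i$ is at most $2\delta_i$}: for $i\ge 2$ this is exactly the termination condition of phase $i-1$, which exited with $V\le\delta_{i-1}=2\delta_i$; for $i=1$ one only has the universal bound $V\le 2$. Throughout phase $i$ the gap is non-increasing (\Cref{lem:decrease}, applicable since $\varepsilon=\rho/2$), and at every iteration with $V>\delta_i$ it drops by at least $\varepsilon\cdot\delta_i=\tfrac\rho2\delta_i$. Hence, starting from $V\le 2\delta_i$, after at most $\lceil 2/\rho\rceil$ iterations we reach $V\le\delta_i$ and the phase ends; the exceptional first phase starts from $V\le 2$ and needs at most $\lceil 6/\rho\rceil$. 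This is the crux: the schedule $\delta_i=2^{-i}$ forces the initial gap of a phase to be only a constant factor above that phase's target, so the ``distance'' $2\delta_i-\delta_i$ and the per-step progress $\tfrac\rho2\delta_i$ both scale with $\delta_i$ and the $\delta_i$ cancels.

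Summing, the total iteration count is at most $\lceil 6/\rho\rceil+(i^*-1)\lceil 2/\rho\rceil=O\!\big(\tfrac1\rho\log\tfrac1\delta\big)$. For the geometric-decrease assertion I would re-run the argument of \Cref{cor:decrease} \emph{inside} a phase: since $V(\vx^t,\vy^t)\le 2\delta_i$ is maintained throughout phase $i$, the multiplicative constant there becomes $c=\varepsilon\delta_i/V(\vx^t,\vy^t)\ge\tfrac\rho4$, now a genuine constant (in the first phase one still gets a constant $c\ge\rho/8$ from $V\le 2$). Thus within every phase the duality gap contracts by a fixed factor $1-\tfrac\rho4$ per iteration, and this is precisely where the improvement over \Cref{thm:main} comes from: there the denominator of $c$ was bounded only by the global constant $V\le 2$, costing a factor $1/\delta$ in the iteration count, whereas here that crude bound is paid only once, at the start of the first phase, rather than at every iteration.

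The main obstacle I anticipate is not a hard estimate but getting the inter-phase bookkeeping exactly right: confirming that the current profile really is passed unchanged between phases (so that the invariant ``$V\le 2\delta_i$ at the start of phase $i$'' genuinely holds), treating the first phase separately because $\delta_0=1$ is never used as a target while $V$ may be as large as $2$, and counting the number of phases correctly (one extra halving is performed to first reach $\delta_i\le\delta$). A secondary point to verify is that the hypothesis of \Cref{lem:decrease} is in force at every iteration with $V>\delta_i$; this follows as in \Cref{lem:bound-approx_DD} by evaluating the $\rho$-directional derivative along the direction pointing to a true equilibrium $\vz^*$, which gives $\nabla_{\rho,\vz^*}V(\vz)\le -V(\vz)<-\delta_i$ and so makes the minimizing direction a strict descent direction whenever $V>\delta_i$.
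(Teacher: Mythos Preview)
Your proposal is correct and follows essentially the same route as the paper: partition the run into phases indexed by $i$, use that phase $i$ starts with $V\le\delta_{i-1}=2\delta_i$, and combine this with \Cref{lem:decrease} to get $O(1/\rho)$ iterations per phase and $O(\log(1/\delta))$ phases. The only cosmetic difference is that you count iterations within a phase via the additive drop $\tfrac{\rho}{2}\delta_i$ per step, whereas the paper first passes to the multiplicative contraction $(1-c_i)$ with $c_i\ge\rho/4$ and then counts; both yield the same $O(1/\rho)$ per phase, and your separate handling of phase~1 (with the $\rho/8$ constant) matches the paper's parenthetical treatment of that case.
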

\begin{proof}
We think of the iterations of the entire algorithm as divided into epochs, where each epoch corresponds to a new value for $\delta$.
Fix an epoch $i$, with $i>0$. Within this epoch, \Cref{alg:main} is run with approximation parameter $\delta_i$. Consider an arbitrary iteration of \Cref{alg:main} during this epoch, say at time $t+1$, starting with the profile $\vz^t = (\vx^t, \vy^t)$ and ending at the profile $\vz^{t+1}=(\vx^{t+1}
, \vy^{t+1})$. By \Cref{lem:decrease}, we have that $V(\vz^{t+1}) \le V(\vz^t) - \epsilon\cdot \delta_i = (1-c_i)\cdot V(\vz^t)$, where
$c_i =  \frac{\epsilon \cdot \delta_i }{V(\vz^t)} = \frac{\rho \cdot \delta_i }{2 \cdot V(\vz^t)}$. Since we are at epoch $i$, we know that $V(\vz^t) \le \delta_{i-1} = 2 \cdot \delta_i$, because the duality gap was at most $\delta_{i-1}$ at the beginning of epoch $i$ and within the epoch it only decreases further due to \Cref{lem:decrease} (for epoch 1, it is even better, since $V(\vz^t) \leq V(\vz^0)\le 2 = 2\delta_0\leq 4\delta_1$, where $\vz^0$ is the initial profile). 
Therefore,
$c_i \geq  \frac{\rho \cdot \delta_i }{2 \cdot \delta_{i-1}}=\frac{\rho}{4}$. Hence, we have established that in any iteration, regardless of the epoch:
\begin{equation*}
\begin{split}
    V(\vz^{t+1}) \leq \Big(1-\frac{\rho}{4}\Big) \cdot  V(\vz^t) \leq  \Big(1-\frac{\rho}{4}\Big)^t \cdot  V(\vz^0).
\end{split}
\end{equation*}

Since $\rho$ is constant, we have a geometric decrease, and this proves the first part of the theorem.

To bound the total number of iterations, let $t_i$ be the number of iterations of \Cref{alg:main} within epoch $i$, after which, the algorithm achieves a $\delta_i$-NE. 
Then, similar to the proof of \Cref{thm:main}, and since in the beginning of epoch $i$, the duality gap is at most $\delta_{i-1}$, we have that $t_i$ should satisfy
\[(1-c_i)^{t_i} \cdot \delta_{i-1} \leq \delta_i
    \implies t_i \ge \frac{1}{\log\frac{1}{1-c_i}} \implies t_i\ge  \frac{1-c_i}{c_i}
\]
Thus, at epoch $i$, we need $t_i = O(\frac{1}{\rho})$ to reach a $\delta_i$-NE. Next, note that if $k$ is the total number of epochs required to achieve a $\delta$-NE, when starting with $\delta_0$, it holds that $\frac{\delta_0}{2^k} \le \delta \implies k \geq \log\frac{\delta_0}{\delta}$. Since $\delta_0= 1$, the number of required epochs is $O(\log\frac{1}{\delta})$. Therefore, the total number of iterations for the entire algorithm is $O\Big( \frac{1}{\rho} \cdot \log\frac{1}{\delta} \Big)$.\end{proof}
To demonstrate the flexibility of our approach, we conclude the theoretical exploration with yet another variation, where we additionally use a decreasing schedule for the value of $\rho$. Specifically, this gives rise to the following scheme which we refer to as \Cref{alg:dr_speed}.
    
\begin{itemize}
    \item Use the same schedule for $\delta_i$ as \Cref{alg:d_speed}. 
    \item At iteration $i$ set $\rho_i = \sqrt{\delta_i}$, for \Cref{alg:main} (with $\varepsilon_i = \frac{\rho_i}{2}$).
\end{itemize} 
Note that we have now eliminated the dependence on $\rho$ but at the expense of making more expensive the dependence on $\delta$.
\begin{boxalgo}{Decaying Delta and Rho Speedup}\label{alg:dr_speed}
\hphantom{O}INPUT: A game $(\mR, -\mR)$ and an approximation parameter $\delta \in (0, 1]$.\\
OUTPUT: A $\delta$-NE strategy profile.\\\\
\hphantom{OUTPUT:} Pick an arbitrary strategy profile $(\vx, \vy)$\\
\hphantom{OUTPUT:} Set $i = 0$ and $\delta_0 = 1$\\
\hphantom{OUTPUT:} While TRUE\\
\hphantom{OUTPUT:} \quad \quad Set $i = i+1$ and $\delta_i = \delta_{i-1} / 2$\\
\hphantom{OUTPUT:} \quad \quad Set $\rho_i = \sqrt{\delta_i}$ and $\varepsilon = \frac{\rho_i}{2}$\\
\hphantom{OUTPUT:} \quad \quad Update $(\vx, \vy)$ via Algorithm 1 $\Big((\mR,-\mR), \delta_i, \rho_i, \varepsilon \Big)$\\
\hphantom{OUTPUT:} \quad \quad If $\delta_i \le \delta:$ break\\
\hphantom{OUTPUT:} Return $(\vx, \vy)$\\
\end{boxalgo}

\begin{theorem}\label{thm:sqrtdelta}
     \Cref{alg:dr_speed} reaches a $\delta$-Nash equilibrium after at most $O\Big(\frac{1}{\sqrt{\delta}} \Big)$ iterations, for any constant $\delta$.
\end{theorem}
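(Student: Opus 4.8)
The plan is to replay the epoch-based argument behind \Cref{thm:decaydelta}, but to measure the per-iteration progress against the \emph{current} threshold $\delta_i$ rather than against the crude bound $V\le 2$; this sharper accounting, combined with the choice $\rho_i=\sqrt{\delta_i}$, is exactly what yields the faster rate. Write $\delta_i=2^{-i}$, $\rho_i=\sqrt{\delta_i}$ and $\varepsilon_i=\rho_i/2$, and view the execution of \Cref{alg:dr_speed} as a sequence of epochs, epoch $i$ being the invocation of \Cref{alg:main} with parameters $(\delta_i,\rho_i,\varepsilon_i)$.

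First I would bound $t_i$, the number of iterations spent in epoch $i\ge 2$. At the start of epoch $i$ the duality gap is at most $\delta_{i-1}=2\delta_i$, since epoch $i-1$ terminated with $V\le\delta_{i-1}$ and, by \Cref{lem:decrease}, $V$ only ever decreases. While \Cref{alg:main} has not terminated within the epoch we have $V>\delta_i$, so \Cref{lem:decrease} applies (its hypothesis $\varepsilon_i\le\rho_i/2$ holds with equality) and each iteration decreases the gap by at least $\varepsilon_i\cdot\delta_i=\tfrac12\sqrt{\delta_i}\,\delta_i=\tfrac12\delta_i^{3/2}$. Hence driving the gap from $2\delta_i$ down to $\delta_i$ requires at most
\[
t_i \;\le\; \frac{2\delta_i-\delta_i}{\tfrac12\,\delta_i^{3/2}} \;=\; \frac{2}{\sqrt{\delta_i}} \;=\; 2\cdot 2^{i/2}
\]
iterations. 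Epoch $1$ starts from $V\le 2$ and, by the same reasoning with $\rho_1=\sqrt{1/2}$, finishes within a constant number of iterations.

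Then I would count the epochs and sum. The outer loop halts at the first index $k$ with $\delta_k=2^{-k}\le\delta$, so $k=\lceil\log_2(1/\delta)\rceil$ and in particular $2^{k}\le 2/\delta$. Adding the per-epoch bounds gives a geometric series (ratio $\sqrt 2$) dominated by its last term:
\[
\sum_{i=1}^{k} t_i \;=\; O(1)+\sum_{i=2}^{k} 2\cdot 2^{i/2} \;=\; O\!\big(2^{k/2}\big) \;=\; O\!\left(\frac{1}{\sqrt{\delta}}\right),
\]
which is the asserted bound.

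There is no serious obstacle; the one point I would take care over is the licence to apply \Cref{lem:decrease} at every iteration of every epoch, since \Cref{alg:main} iterates precisely while $V(\vx,\vy)>\delta_i$, whereas \Cref{lem:decrease} is phrased for profiles that are not a $\delta_i$-NE. These are reconciled through the inequality underlying \Cref{lem:bound-approx_DD}: for the minimizing direction $\vz^\prime$ one has $\nabla_{\rho_i,\vz^\prime}V(\vz)\le -V(\vz)$ — obtained by taking the direction that points at an exact equilibrium and using that the value of the game simultaneously upper-bounds the relevant $\max$ term and lower-bounds the $\min$ term — so $V(\vz)>\delta_i$ by itself already forces $\nabla_{\rho_i,\vz^\prime}V(\vz)<-\delta_i$, and the computation in the proof of \Cref{lem:decrease} goes through unchanged. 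Beyond this, everything is routine: the substance of the theorem is simply that setting $\rho_i=\sqrt{\delta_i}$ makes the per-epoch cost $t_i=\Theta(1/\sqrt{\delta_i})$ grow at exactly the rate for which the epoch costs form a geometric series summing to $\Theta(1/\sqrt{\delta})$, trading the $1/\rho$ factor of \Cref{thm:decaydelta} for the worse $1/\sqrt{\delta}$ dependence.
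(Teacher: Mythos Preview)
Your proof is correct and follows essentially the same epoch-based strategy as the paper: bound the work in epoch $i$ by $O(1/\sqrt{\delta_i})$ and sum the resulting geometric series. The only cosmetic difference is that the paper converts the per-step additive decrease into a multiplicative one (getting $t_i\le\lceil 4/\rho_i\rceil$ from $(1-c_i)^{t_i}\delta_{i-1}\le\delta_i$), whereas you use the additive bound $V\to V-\varepsilon_i\delta_i$ from \Cref{lem:decrease} directly to get $t_i\le 2/\sqrt{\delta_i}$; both lead to the same sum. Your closing paragraph, reconciling the loop condition $V>\delta_i$ with the ``not a $\delta_i$-NE'' hypothesis of \Cref{lem:decrease} via the inequality $\nabla_{\rho_i,\vz'}V(\vz)\le -V(\vz)$ for the minimizing direction, is a valid clarification that the paper itself glosses over.
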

\begin{proof}
    Following the same analysis as in the proof of \Cref{thm:decaydelta}, we obtain that for any 2 consecutive iterations within epoch $i$, we have that $V(\vz^{t+1}) \le (1-c_i)\cdot V(\vz^t)$, where
$c_i \geq  \frac{\rho_i}{4}$. This implies that if $t_i$ is the number of iterations needed within epoch $i$, it holds that $t_i \leq \lceil \frac{4}{\rho_i} \rceil \leq \frac{4}{\sqrt{\delta_i}} + 1$. 

We also have again that the total number of epochs is 
$k = O(\log(1/\delta))$. Putting everything together, and since $\delta_i = \frac{1}{2^{i}}$, the total number of iterations is 
    \[ t = \sum_{i=1}^k t_i \leq \sum_{i=1}^k (\frac{4}{\sqrt{\delta_i}}+1) = k + 4\cdot \sum_{i=1}^k \frac{1}{\sqrt{\frac{1}{2^{i}}}} = k + O \Big( 2^{k/2}\Big) = O \Big( \frac{1}{\sqrt{\delta}} \Big)\]\end{proof}

\section{Experimental Evaluation}\label{sec:experiments}

All our algorithms were implemented in Python, and the exact specifications can be found in \Cref{app:exp}. Before proceeding to our main findings, we exhibit first that the geometric decrease in the duality gap can indeed be observed experimentally. \Cref{fig:duality} shows a typical behavior of our algorithms, in terms of the duality gap. The figure here is for a random game of size $n=1000$.

\begin{figure}[b]
    \centering
    \includegraphics[scale=.4]{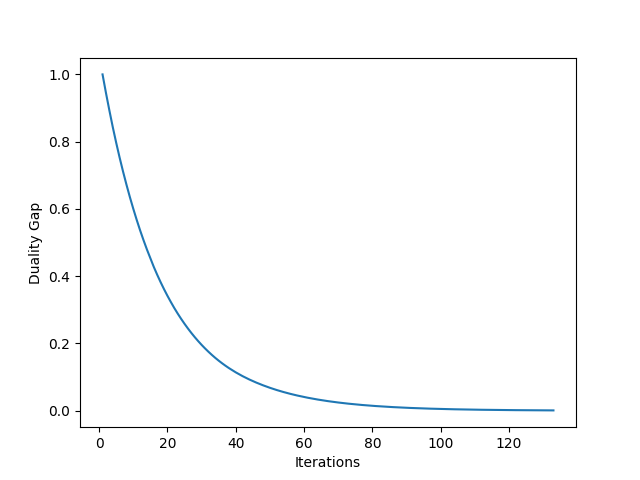}
    \caption{The decrease in the duality gap for a random game.}
    \label{fig:duality}
\end{figure}

\subsection{From Theory to Implementation} 
\label{sec:theory-to-impl}
We deem useful to discuss first how to approach the selection of the parameters that the algorithms depend on. We have seen in \Cref{alg:main} and its variants two families of parameters: $\rho_i$ and $\delta_i$. A third parameter is the learning rate $\varepsilon$, which is the step size that we take in each iteration.
\paragraph{Choice of $\bm\varepsilon$} We have established that as long as $\varepsilon \le \rho/2$, the points along the line $(1-\varepsilon)\cdot(\vx,\vy) + \varepsilon \cdot (\vx', \vy')$ decrease the duality gap (\Cref{lem:decrease}). Note, though, that the problem of minimizing $V$ along this set is a convex optimization problem. Hence, we can try to find the optimal $\varepsilon_i$ at each iteration $i$, and there are a few possible approaches for this: line search, ternary search or even solving it exactly using dynamic programming. We decided to use the following heuristic: for large values of the duality gap, namely $V>0.1$, we employ ternary search and as the duality gap decreases we use line search but only on a small part of the line. More specifically, once $V\le 0.1$ we start with $\varepsilon = 0.2$ and decrease it by $10\%$ across iterations. We decided upon this method since we noticed that experiments conform to theory for smaller values of $V$ and $\rho$. Finally, a more ML-like approach would be to set a constant $\varepsilon$, similarly to a constant step size $\eta$ in gradient methods. While this approach has merit, it did not show improved performance (see more in \Cref{app:eta}).  

\paragraph{Choice of $\bm\rho$ (and a new algorithm)} The most critical parameter regarding the running time of our algorithms is $\rho$, since it controls the size of the LPs in \Cref{alg:find_dir}, i.e., the number of constraints, via the sets of $\rho$-approximate best responses, $BR^{\rho}_r(\vy)$ and $BR^{\rho}_c(\vx)$. We need $\rho$ to be large enough to avoid having only a single best response, in which case our algorithms reduces to Best Response Dynamics, while at the same time it should be small enough so that the LPs have small size and we can solve them fast. Our experimentation did not reveal any particular range of $\rho$ with a consistently better performance. As a result, in addition to our existing algorithms, we developed one more approach, independent of $\rho$: we fix a number $k$ (much smaller than $n$), and in every iteration, we include in the approximate best response set of each player its top $k$ better responses. We refer to this approach as the \textit{Fixed Support Variant} in the sequel. We used $k\!=\!100$ for our experiments and point to \Cref{app:k} for justification. 

\paragraph{Optimizing FindDirection} For this algorithm we use two implementation tricks. The first one is quite simple: it is easy to observe that the LP of \Cref{alg:find_dir} is equivalent to solving two smaller LPs (see \Cref{app:find_dir}); it turns out that solving it this way is faster. The second trick revolves around $\rho$. Recall that the direction we find is itself an approximation. Hence, solving the LP approximately is meaningful, in the sense that it provides an even coarser approximate direction. It turns out that even a $0.1$ approximate solution (which is achievable by setting an appropriate parameter in the LP solver that we used) works for most cases, 
and results in significantly less running time (see also the discussion in \Cref{app:low-rank}).
\subsection{Comparisons between Our Variants}
\label{sec:exp-ours}
We report  first on our comparisons between \Cref{alg:d_speed} with $\rho = 0.001$, henceforth called the \textit{Constant $\rho$ Variant}, \Cref{alg:dr_speed} with $\rho_i = 0.01\sqrt{\delta_i}$, which we refer to as the \textit{Adaptive $\rho$ Variant} and our Fixed Support Variant discussed in \Cref{sec:theory-to-impl}.
We note that for the variant with the adaptive value of $\rho$, we did not follow precisely the values presented by our theoretical analysis, of $\rho_i = \sqrt{\delta_i}$. Although theoretically equivalent, this change was only  to avoid a blowup in the number of best response strategies used in \Cref{alg:find_dir} during the first iterations, i.e. for $\delta_1$ and $\delta_2$ we would have $\rho > 0.7$, which is quite large and undesirable.

To test our algorithms we generated random games of size $n\times n$, where each entry is picked uniformly at random from $[0, 1]$. The size of the games range from $500$ to $5000$ pure strategies with a step of $500$. For each size we generate 30 games and solve them to an accuracy of $\delta = 0.01$. 
We used two types of initialization in all methods, the fully uniform strategy profile and the profile  $(\ve_1, \ve_1)$, i.e., first row, first column. The latter has the advantage of not being too close to a Nash equilibrium from the start, in almost all games, and reveals more clearly the exploration that the method performs. 

The averaged results are presented in \Cref{fig:variant_comp}, where we show both the actual time and the number of iterations. In terms of actual time, our Fixed Support variant is the clear winner. Although \Cref{fig:variant_comp} reveals that as $n$ grows, the Fixed $\rho$ variant attains a lower number of iterations, this does not translate into improved running time. The intuition for this is that as $n$ grows and $\rho$ remains constant, we expect a larger number of strategies to be $\rho$-best responses. Consequently, the LP in \Cref{alg:find_dir} is closer to the full LP and thus more informative, but at the same time more expensive to solve. 
\begin{figure}
    \centering
    \includegraphics[width=\linewidth]{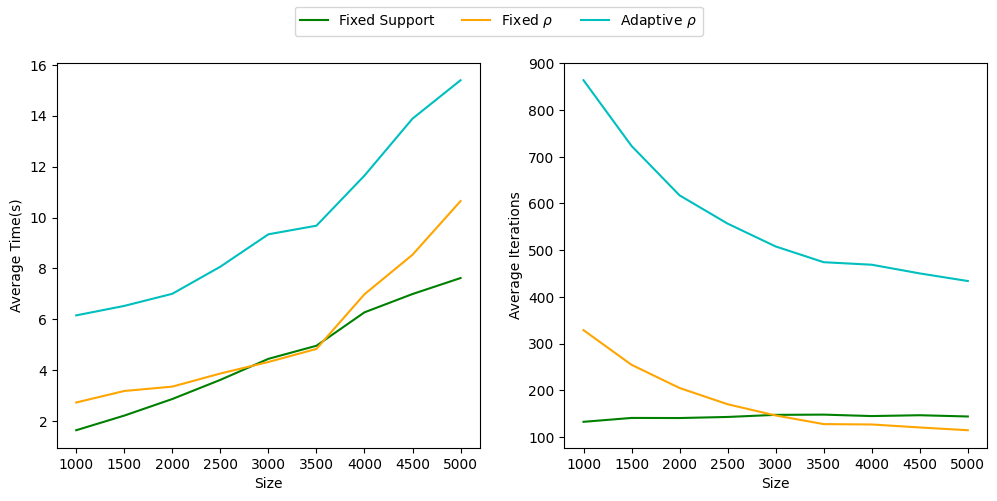}
    \caption{Average time and number of iterations for our variants}
    \label{fig:variant_comp}
\end{figure}

As a result of these comparisons, we select our Fixed Support variant as the variant to compare against other methods from the literature in the next subsection.

\subsection{Comparisons with LP and Gradient Methods}\label{subsec:comps}

We compared our Fixed Support variant against solving directly the full LP with a standard LP solver, and against a prominent first order method. Regarding the LP solver, we used the standard method of SciPy. We note that we used the same method for the smaller LPs that we solve in \Cref{alg:find_dir} of our methods.
To maintain an equal comparison with our algorithms, we used a tolerance of $0.01$. 
As for first order methods, we used Optimistic Gradient Descent Ascent (OGDA), which is among the fastest gradient based methods, with step size $\eta = 0.01$. We refer to \Cref{app:OGDA} for its definition. Another popular method is Optimistic Multiplicative Weights Update (OMWU), which however does not behave as well in practice, as also explained in \cite{CF+24}.

For each value of $n$ that we used, we generated  50 uniformly random games and 50 games using the Gaussian distribution. We also generated more structured but still random games, such as games with low rank. We present here the comparisons for the uniformly random games and we refer to \Cref{app:gauss,,app:low-rank} for the other classes of games. 
As in \Cref{sec:exp-ours}, we used two different initializations: starting from $(\ve_1, \ve_1)$ and starting from the uniform strategy profile: $(\frac{1}{n}, \dots, \frac{1}{n})$. 
The average running time can be seen in \Cref{fig:lpogda}. We summarize our findings as follows:
\begin{itemize}[leftmargin=*]
    \item The LP solver was far slower, even for lower values of $n$, as shown in the left subplot, and we dropped it from the experiments with larger games.
    \item When the initialization is $(e_1, e_1)$ (or any pure strategy profile), the advantage of our method is more clear (see left subplot of \Cref{fig:lpogda}). When we start with the uniform profile, we observe that our method is slower for smaller games but becomes faster in very large games (right subplot).
    \item Another observation is that our method seems smoother with less sharp jumps than OGDA when starting from $(\ve_1, \ve_1)$ while the opposite holds for the uniform profile. 

\end{itemize}

We view as the main takeaway of our experiments that our method is comparable to OGDA and in several cases even outperforms OGDA. One limitation of our current implementation is the choice of $\delta=0.01$. For much lower accuracies, our methods occasionally get stuck. We therefore feel that the overall approach deserves further exploration, especially on potential ways of accelerating its execution. 

\begin{figure}
    \centering
\includegraphics[width=\linewidth]{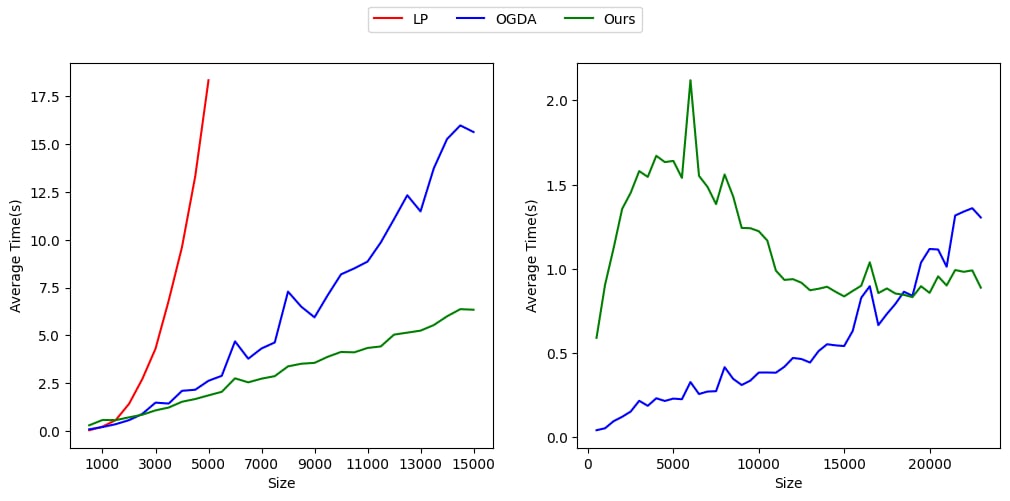}
    \caption{Time comparison between our Fixed Support Variant, LP solver and Optimistic Gradient Descent-Ascent}
    \label{fig:lpogda}
\end{figure}

\section{Conclusions}
\label{sec:conclusions}
We have analyzed a descent-based method for the duality gap in zero-sum games. Our goal has been to demonstrate the potential of such algorithms as a proof of concept. We expect that our method can be further optimized in practice and find this a promising direction for future work. In particular, one idea to explore is whether we can reuse the LP solutions we get in \Cref{alg:find_dir} from one iteration to the next (since we only change the current solution slightly by a step of size $\epsilon$). Exploring such {\it warm start} strategies (see e.g. \cite{YW02}) could provide significant speedups.

\bibliography{arxiv}

\begin{thebibliography}{31}
\providecommand{\natexlab}[1]{#1}
\providecommand{\url}[1]{\texttt{#1}}
\expandafter\ifx\csname urlstyle\endcsname\relax
  \providecommand{\doi}[1]{doi: #1}\else
  \providecommand{\doi}{doi: \begingroup \urlstyle{rm}\Url}\fi

\bibitem[Adler(2013)]{Adler13}
I.~Adler.
\newblock The equivalence of linear programs and zero-sum games.
\newblock \emph{Int. J. Game Theory}, 42\penalty0 (1):\penalty0 165--177, 2013.

\bibitem[Arora et~al.(2012)Arora, Hazan, and Kale]{AHK12}
S.~Arora, E.~Hazan, and S.~Kale.
\newblock The multiplicative weights update method: a meta-algorithm and applications.
\newblock \emph{Theory Comput.}, 8\penalty0 (1):\penalty0 121--164, 2012.

\bibitem[Bailey and Piliouras(2018)]{BP18}
J.~P. Bailey and G.~Piliouras.
\newblock Multiplicative weights update in zero-sum games.
\newblock In \emph{Proceedings of the Conference on Economics and Computation (EC'18)}, pages 321--338, 2018.

\bibitem[Brown(1951)]{B51}
G.~W. Brown.
\newblock Iterative solution of games by fictitious play.
\newblock In \emph{Activity Analysis of Production and Allocation}. 1951.

\bibitem[Cai and Zheng(2023)]{CZ23}
Y.~Cai and W.~Zheng.
\newblock Doubly optimal no-regret learning in monotone games.
\newblock In \emph{International Conference on Machine Learning, {ICML} 2023, 23-29 July 2023, Honolulu, Hawaii, {USA}}, volume 202 of \emph{Proceedings of Machine Learning Research}, pages 3507--3524. {PMLR}, 2023.

\bibitem[Cai et~al.(2022)Cai, Oikonomou, and Zheng]{COZ22}
Y.~Cai, A.~Oikonomou, and W.~Zheng.
\newblock Finite-time last-iterate convergence for learning in multi-player games.
\newblock In \emph{Proceedings of the Annual Conference on Neural Information Processing Systems (NeurIPS'22)}, 2022.

\bibitem[Cai et~al.(2024)Cai, Farina, Grand{-}Cl{\'{e}}ment, Kroer, Lee, Luo, and Zheng]{CF+24}
Y.~Cai, G.~Farina, J.~Grand{-}Cl{\'{e}}ment, C.~Kroer, C.~Lee, H.~Luo, and W.~Zheng.
\newblock Fast last-iterate convergence of learning in games requires forgetful algorithms.
\newblock \emph{CoRR}, abs/2406.10631, 2024.

\bibitem[Daskalakis and Panageas(2019)]{Daskalakis2019LastIterateCZ}
C.~Daskalakis and I.~Panageas.
\newblock Last-iterate convergence: Zero-sum games and constrained min-max optimization.
\newblock In \emph{Proceedings of the ITCS'19}, 2019.

\bibitem[Daskalakis et~al.(2018)Daskalakis, Ilyas, Syrgkanis, and Zeng]{daskalakis2018training}
C.~Daskalakis, A.~Ilyas, V.~Syrgkanis, and H.~Zeng.
\newblock Training {GAN}s with optimism.
\newblock In \emph{Proceedings of the International Conference on Learning Representations (ICLR'18)}, 2018.

\bibitem[Daskalakis et~al.(2021)Daskalakis, Skoulakis, and Zampetakis]{DSZ21}
C.~Daskalakis, S.~Skoulakis, and M.~Zampetakis.
\newblock The complexity of constrained min-max optimization.
\newblock In \emph{{STOC} '21: 53rd Annual {ACM} {SIGACT} Symposium on Theory of Computing, Virtual Event, Italy, June 21-25, 2021}, pages 1466--1478. {ACM}, 2021.

\bibitem[Deligkas et~al.(2017)Deligkas, Fearnley, Savani, and Spirakis]{DFSS17}
A.~Deligkas, J.~Fearnley, R.~Savani, and P.~G. Spirakis.
\newblock Computing approximate {N}ash equilibria in polymatrix games.
\newblock \emph{Algorithmica}, 77\penalty0 (2):\penalty0 487--514, 2017.

\bibitem[Deligkas et~al.(2023)Deligkas, Fasoulakis, and Markakis]{DFM23}
A.~Deligkas, M.~Fasoulakis, and E.~Markakis.
\newblock A polynomial-time algorithm for 1/3-approximate {N}ash equilibria in bimatrix games.
\newblock \emph{{ACM} Trans. Algorithms}, 19\penalty0 (4):\penalty0 31:1--31:17, 2023.

\bibitem[Diakonikolas et~al.(2021)Diakonikolas, Daskalakis, and Jordan]{DBLP:conf/aistats/DiakonikolasDJ21}
J.~Diakonikolas, C.~Daskalakis, and M.~I. Jordan.
\newblock Efficient methods for structured nonconvex-nonconcave min-max optimization.
\newblock In A.~Banerjee and K.~Fukumizu, editors, \emph{The 24th International Conference on Artificial Intelligence and Statistics, {AISTATS} 2021, April 13-15, 2021, Virtual Event}, volume 130 of \emph{Proceedings of Machine Learning Research}, pages 2746--2754. {PMLR}, 2021.

\bibitem[Fasoulakis et~al.(2022)Fasoulakis, Markakis, Pantazis, and Varsos]{FMPV22}
M.~Fasoulakis, E.~Markakis, Y.~Pantazis, and C.~Varsos.
\newblock Forward looking best-response multiplicative weights update methods for bilinear zero-sum games.
\newblock In \emph{Proceedings of the International Conference on Artificial Intelligence and Statistics (AISTATS'22)}, pages 11096--11117, 2022.

\bibitem[Gilpin et~al.(2012)Gilpin, Pena, and Sandholm]{gilpin2012first}
A.~Gilpin, J.~Pena, and T.~Sandholm.
\newblock First-order algorithm with convergence for-equilibrium in two-person zero-sum games.
\newblock \emph{Mathematical programming}, 133\penalty0 (1):\penalty0 279--298, 2012.

\bibitem[Golowich et~al.(2020)Golowich, Pattathil, and Daskalakis]{GPD20}
N.~Golowich, S.~Pattathil, and C.~Daskalakis.
\newblock Tight last-iterate convergence rates for no-regret learning in multi-player games.
\newblock In \emph{Advances in Neural Information Processing Systems 33: Annual Conference on Neural Information Processing Systems 2020, NeurIPS 2020, December 6-12, 2020, virtual}, 2020.

\bibitem[Goodfellow et~al.(2014)Goodfellow, Pouget-Abadie, Mirza, Xu, Warde-Farley, Ozair, Courville, and Bengio]{GPMXWOCB14}
I.~J. Goodfellow, J.~Pouget-Abadie, M.~Mirza, B.~Xu, D.~Warde-Farley, S.~Ozair, A.~Courville, and Y.~Bengio.
\newblock Generative {A}dversarial {N}ets.
\newblock In \emph{Proceedings of Annual Conference on Neural Information Processing Systems (NIPS '14)}, pages 2672--2680, 2014.

\bibitem[Hoda et~al.(2010)Hoda, Gilpin, Pena, and Sandholm]{hoda2010smoothing}
S.~Hoda, A.~Gilpin, J.~Pena, and T.~Sandholm.
\newblock Smoothing techniques for computing {N}ash equilibria of sequential games.
\newblock \emph{Mathematics of Operations Research}, 35\penalty0 (2):\penalty0 494--512, 2010.

\bibitem[Korpelevich(1976)]{K76}
G.~Korpelevich.
\newblock The extragradient method for finding saddle points and other problems.
\newblock \emph{Matecon}, 12:\penalty0 747--756, 1976.

\bibitem[Liang and Stokes(2019)]{DBLP:conf/aistats/LiangS19}
T.~Liang and J.~Stokes.
\newblock Interaction matters: {A} note on non-asymptotic local convergence of generative adversarial networks.
\newblock In \emph{Proceedings of The 22nd International Conference on Artificial Intelligence and Statistics, {AISTATS}'19}, pages 907--915, 2019.

\bibitem[Lu and Yang(2023)]{LY23}
H.~Lu and J.~Yang.
\newblock On the infimal sub-differential size of primal-dual hybrid gradient method and beyond.
\newblock \emph{CoRR}, abs/2206.12061, 2023.

\bibitem[Mertikopoulos et~al.(2018)Mertikopoulos, Papadimitriou, and Piliouras]{MPP18}
P.~Mertikopoulos, C.~H. Papadimitriou, and G.~Piliouras.
\newblock Cycles in adversarial regularized learning.
\newblock In \emph{Proceedings of the Twenty-Ninth Annual {ACM-SIAM} Symposium on Discrete Algorithms, {SODA} 2018}, pages 2703--2717. {SIAM}, 2018.

\bibitem[Nash(1951)]{N51}
J.~Nash.
\newblock Non-cooperative games.
\newblock \emph{Annals of Mathematics}, 54 (2), 1951.

\bibitem[Nikaido and Isoda(1955)]{NI55}
H.~Nikaido and K.~Isoda.
\newblock Note on noncooperative convex games.
\newblock \emph{Pacific Journal of Mathematics}, 5\penalty0 (1):\penalty0 807815, 1955.

\bibitem[Popov(1980)]{P80}
L.~Popov.
\newblock A modification of the {A}rrow-{H}urwicz method for search of saddle points.
\newblock \emph{Mathematical notes of the Academy of Sciences of the USSR}, 28:\penalty0 845--848, 1980.

\bibitem[Raghunathan et~al.(2019)Raghunathan, Cherian, and Jha]{RCJ19}
A.~U. Raghunathan, A.~Cherian, and D.~K. Jha.
\newblock Game theoretic optimization via gradient-based {N}ikaido-{I}soda function.
\newblock In \emph{Proceedings of the 36th International Conference on Machine Learning, {ICML} 2019}, volume~97 of \emph{Proceedings of Machine Learning Research}, pages 5291--5300. {PMLR}, 2019.

\bibitem[Robinson(1951)]{R51}
J.~Robinson.
\newblock An iterative method of solving a game.
\newblock \emph{Annals of Mathematics}, pages 296--301, 1951.

\bibitem[Tsaknakis and Spirakis(2008)]{TS08}
H.~Tsaknakis and P.~G. Spirakis.
\newblock An optimization approach for approximate {N}ash equilibria.
\newblock \emph{Internet Math.}, 5\penalty0 (4):\penalty0 365--382, 2008.

\bibitem[Von~Neumann(1928)]{VN28}
J.~Von~Neumann.
\newblock Zur theorie der gesellschaftsspiele.
\newblock \emph{Math. Ann.}, 100:\penalty0 295–320, 1928.

\bibitem[Wei et~al.(2021)Wei, Lee, Zhang, and Luo]{Wei2021LinearLC}
C.-Y. Wei, C.-W. Lee, M.~Zhang, and H.~Luo.
\newblock Linear last-iterate convergence in constrained saddle-point optimization.
\newblock In \emph{Proceedings of the 9th International Conference on Learning Representations {ICLR} '21}, 2021.

\bibitem[Yildirim and Wright(2002)]{YW02}
E.~A. Yildirim and S.~J. Wright.
\newblock Warm-start strategies in interior-point methods for linear programming.
\newblock \emph{{SIAM} J. Optim.}, 12\penalty0 (3):\penalty0 782--810, 2002.

\end{thebibliography}

\newpage

\appendix

{\hypersetup{linkcolor=black}
\section{Missing Proofs from \Cref{sec:alg}}\label{app:missing-main}
}

{\hypersetup{linkcolor=black}
\subsection{Proof of \Cref{thm:comb-version}}\label{app:dr-form}
}

Using the definition of the duality gap, the directional derivative is equal to
\[
\nabla_{(\vx^\prime, \vy^\prime)}V(\vx, \vy)= 
\lim_{\varepsilon \to 0} 
\frac{\max_i \ve_i^\top \mR\Big((1-\varepsilon )\cdot \vy +\varepsilon \cdot \vy^\prime\Big)  - \min_j \Big((1-\varepsilon) \cdot \vx +\varepsilon\cdot \vx^\prime\Big)^\top \mR\ve_j - V(\vx, \vy)}{\varepsilon}.
\]

However, we can write up the term $\max_i \ve_i^\top \mR\Big((1-\varepsilon)\cdot \vy +\varepsilon \cdot \vy^\prime\Big)$, similarly to \cite{DFSS17,TS08}, as 
\begin{equation}
\label{Maximum analysis}
\begin{split}
&\max_i \ve_i^\top \mR\Big((1-\varepsilon) \cdot \vy +\varepsilon \cdot \vy^\prime\Big) = \max_{i\in BR_r(\vy)} \ve_i^\top \mR\Big((1-\varepsilon) \cdot  \vy +\varepsilon \cdot \vy^\prime\Big) \\ 
&+ \max \Big\{0, \max_{i\in \overline{BR_r(\vy)}} \ve_i^\top \mR\Big((1-\varepsilon)\cdot \vy +\varepsilon \cdot \vy^\prime\Big)-\max_{i\in BR_r(\vy)} \ve_i^\top \mR\Big((1-\varepsilon) \cdot \vy +\varepsilon \cdot  \vy^\prime\Big) \Big\},
\end{split}
\end{equation}

where $\overline{BR_r(\vy)}$ is the complement set of ${BR_r(\vy)}$.
Similarly,
\begin{equation}
\label{Minimum analysis}
\begin{split}
&\min_j \Big((1-\varepsilon)\cdot \vx +\varepsilon \cdot \vx^\prime\Big)^\top  R\ve_j = \min_{j\in BR_c(\vx)}\Big((1-\varepsilon)\cdot \vx +\varepsilon \cdot \vx^\prime\Big)^\top  R\ve_j \\ 
&- \max \Big\{0, -\min_{j\in \overline{BR_c(\vx)}}\Big((1-\varepsilon)\cdot \vx +\varepsilon \cdot \vx^\prime\Big)^\top  R\ve_j + \min_{j\in BR_c(\vx)} \Big((1-\varepsilon)\cdot \vx +\varepsilon \cdot \vx^\prime\Big)^\top  R\ve_j \Big\}.
\end{split}
\end{equation}
Arguing in a similar fashion as in \cite{DFSS17}, there exists $\epsilon^*>0$ such that for $\epsilon\leq \epsilon^*$, the term $$\max \Big\{0, \max_{i\in \overline{BR_r(\vy)}} \ve_i^\top \mR\Big((1-\varepsilon)\cdot  \vy +\varepsilon \cdot \vy^\prime\Big)\\-\max_{i\in BR_r(\vy)} \ve_i^\top  R\Big((1-\varepsilon) \cdot \vy +\varepsilon \cdot \vy^\prime\Big) \Big\}$$ 
is zero, and hence it can be ignored when we take the limit of $\epsilon \rightarrow 0$. 
In the same manner, the corresponding term for the column player also becomes $0$. 
 
Note also that for any $i, j\in BR_r(\vy)$, we have that $\ve_i^\top \mR \vy = \ve_j^\top \mR \vy$, and hence the term $\max_{i\in BR_r(\vy)} \ve_i^\top \mR \vy$ is independent of the row we choose, thus $\max_{i\in BR_r(\vy)} \ve_i^\top \mR\Big((1-\varepsilon )\cdot \vy +\varepsilon \cdot \vy^\prime\Big) = (1-\varepsilon )\cdot \max_{i\in BR_r(\vy)} \ve_i^\top \mR\vy +\varepsilon \cdot \max_{i\in BR_r(\vy)} \ve_i^\top \mR\vy^\prime$, similar for the $\min$ part. 
By using this below, we conclude that the directional derivative equals to
\begin{align*}
&\lim_{\varepsilon \to 0} 
\frac{\max_{i\in BR_r(\vy)} \ve_i^\top \mR\Big((1-\varepsilon )\cdot \vy +\varepsilon \cdot \vy^\prime\Big)- \min_{j\in BR_c(\vx)} \Big((1-\varepsilon) \cdot \vx +\varepsilon\cdot \vx^\prime\Big)^\top \mR\ve_j - V(\vx, \vy)}{\varepsilon} \\
=&\lim_{\varepsilon \to 0} 
\frac{(1-\varepsilon )\cdot\max_{i\in BR_r(\vy)}\ve_i^\top \mR \vy +\varepsilon \cdot \max_{i\in BR_r(\vy)}\ve_i^\top \mR \vy^\prime}{\varepsilon} \\ 
&- \frac{(1-\varepsilon) \cdot\min_{j\in BR_c(\vx)} x^\top \mR\ve_j +\varepsilon\cdot \min_{j\in BR_c(\vx)} (\vx^\prime)^\top \mR\ve_j + V(\vx, \vy)}{\varepsilon}\\
=&\lim_{\varepsilon \to 0} 
\frac{(1-\varepsilon )\cdot V(\vx, \vy) + \varepsilon \cdot \Big(\max_{i\in BR_r(\vy)} \ve_i^\top \mR \vy^\prime - \min_{j\in BR_c(\vx)} (\vx^\prime)^\top \mR\ve_j \Big) -V(\vx, \vy)}{\varepsilon} \\ 
= &\max_{i\in BR_r(\vy)}\ve_i^\top \mR \vy^\prime- \min_{j\in BR_c(\vx)} (\vx^\prime)^\top \mR\ve_j-V(\vx, \vy).
\end{align*}
as claimed.\\

{\hypersetup{linkcolor=black}
 \subsection{Proof of \Cref{lem:Bound approximation and exact derivative}}}
By definition, we have that
\begin{align*}
\nabla_{\vz^\prime}V(\vz)
 &= \max_{i\in BR_r(\vy)} \ve_i^\top \mR\vy^\prime-\min_{j\in BR_c(\vx )}(\vx^\prime)^\top \mR\ve_j - V(\vz)\\
 &\leq \max_{i\in BR^{\rho}_r(\vy)} \ve_i^\top \mR\vy^\prime-\min_{j\in BR^{\rho}_c(\vx )}(\vx^\prime)^\top \mR\ve_j - V(\vz)\\
 &=\nabla_{\rho,\vz^\prime}V(\vz),
 \end{align*}
 where the first inequality holds since, by definition, $BR_r(\vy) \subseteq BR^{\rho}_r(\vy)$ and $BR_c(\vx) \subseteq BR^{\rho}_c(\vx)$.\\

{\hypersetup{linkcolor=black}
\subsection{Proof of \Cref{lem:bound-approx_DD}}
}
Let $(\vx^*,\vy^*)$ be a Nash equilibrium, then it holds $V(\vx^*,\vy^*) = 0$ by the definition of the NE, but since the values of $V(\vx, \vy) \in [0,2]$ this implies that $(\vx^*,\vy^*)$ is a global minimum of the function in its domain. Let now a strategy profile $(\vx, \vy)$ such that $V(\vx, \vy)=0 = f_\mR(\vx, \vy) +f_{-\mR}(\vx, \vy)$, this trivially implies that $f_\mR(\vx, \vy) = 0$ and $f_{-\mR}(\vx, \vy) = 0$ since $f_\mR,f_{-\mR} \in [0,1]$, thus we have that $(\vx, \vy)$ is a NE in the zero-sum game.\\

{\hypersetup{linkcolor=black} \subsection{Proof of \Cref{lem:zero}}}
Firstly, we have that
\[
\max_{i\in BR^{\rho}_r(\vy)} \ve_i^\top \mR\Big((1-\varepsilon) \cdot \vy +\varepsilon \cdot  \vy^\prime\Big) \geq \max_{i\in BR^{\rho}_r(\vy)} \ve_i^\top \mR\Big((1-\varepsilon) \cdot \vy \Big)
= \max_{i\in BR^{\rho}_r(\vy)} \ve_i^\top \mR \vy - \varepsilon \cdot \max_{i\in BR^{\rho}_r(\vy)} \ve_i^\top \mR \vy. 
\]
By the definition of the $\max$ function, we have
\begin{equation*}
\begin{split}
 \max_{i\in \overline{BR^{\rho}_r(\vy)}} \ve_i^\top \mR\Big((1-\varepsilon)\cdot \vy +\varepsilon \cdot \vy^\prime\Big)\leq (1-\varepsilon) \cdot \max_{i\in \overline{BR^{\rho}_r(\vy)}} \ve_i^\top \mR \vy + \varepsilon \cdot \max_{i\in \overline{BR^{\rho}_r(\vy)}} \ve_i^\top \mR \vy'.
\end{split}
\end{equation*}
These two bounds give
\begin{gather*}
 \max_{i\in \overline{BR^{\rho}_r(\vy)}} \ve_i^\top \mR\Big((1-\varepsilon)\cdot \vy +\varepsilon \cdot \vy^\prime\Big)-\max_{i\in BR^{\rho}_r(\vy)} \ve_i^\top \mR\Big((1-\varepsilon) \cdot \vy +\varepsilon \cdot  \vy^\prime\Big)\\
 \leq \max_{i\in \overline{BR^{\rho}_r(\vy)}} \ve_i^\top \mR \vy + \varepsilon \cdot \Big (\max_{i\in \overline{BR^{\rho}_r(\vy)}} \ve_i^\top \mR \vy' - \max_{i\in \overline{BR^{\rho}_r(\vy)}} \ve_i^\top \mR \vy \Big)
 -\max_{i\in BR^{\rho}_r(\vy)} \ve_i^\top \mR \vy + \varepsilon \cdot \max_{i\in BR^{\rho}_r(\vy)} \ve_i^\top \mR \vy\\
 =\max_{i\in \overline{BR^{\rho}_r(\vy)}} \ve_i^\top \mR \vy-\max_{i\in BR^{\rho}_r(\vy)} \ve_i^\top \mR \vy + \varepsilon \cdot \Big (\max_{i\in \overline{BR^{\rho}_r(\vy)}} \ve_i^\top \mR \vy' - \max_{i\in \overline{BR^{\rho}_r(\vy)}} \ve_i^\top \mR \vy 
  + \max_{i\in BR^{\rho}_r(\vy)} \ve_i^\top \mR \vy\Big).
\end{gather*}
By the definition of $\rho$-best-response, we have that

\begin{equation*}
\begin{split}
\max_{i\in \overline{BR^{\rho}_r(\vy)}} \ve_i^\top \mR \vy-\max_{i\in BR^{\rho}_r(\vy)} \ve_i^\top \mR \vy < -\rho.
\end{split}
\end{equation*}
Furthermore, we have that $\max_{i\in \overline{BR^{\rho}_r(\vy)}} \ve_i^\top \mR \vy' - \max_{i\in \overline{BR^{\rho}_r(\vy)}} \ve_i^\top \mR \vy 
  + \max_{i\in BR^{\rho}_r(\vy)} \ve_i^\top \mR \vy \leq 2$, since $R_{ij} \leq 1$.
Thus, we have that
\begin{equation*}
\begin{split}
\max_{i\in \overline{BR^{\rho}_r(\vy)}} \ve_i^\top \mR \vy-\max_{i\in BR^{\rho}_r(\vy)} \ve_i^\top \mR \vy + \varepsilon \cdot \Big (\max_{i\in \overline{BR^{\rho}_r(\vy)}} \ve_i^\top \mR \vy' - \max_{i\in \overline{BR^{\rho}_r(\vy)}} \ve_i^\top \mR \vy 
  + \max_{i\in BR^{\rho}_r(\vy)} \ve_i^\top \mR \vy\Big)
  < -\rho + 2 \cdot \varepsilon.
\end{split}
\end{equation*}
So, we want to find a value of $\varepsilon$ such that $-\rho + 2\cdot \varepsilon \leq 0$, which holds for $\varepsilon \leq \frac{\rho}{2}$.
In a very similar fashion, for the second part of the Lemma, we have

\begin{equation*}
\begin{split}
\min_{j\in \overline{BR^{\rho}_c(\vx)}}\Big((1-\varepsilon)\cdot \vx +\varepsilon \cdot \vx^\prime\Big)^\top  R\ve_j 
\geq (1-\varepsilon) \cdot \min_{j\in \overline{BR^{\rho}_c(\vx)}}\vx^\top  R\ve_j + \varepsilon \cdot \min_{j\in \overline{BR^{\rho}_c(\vx)}}(\vx^\prime)^\top  R\ve_j.
\end{split}
\end{equation*}
Furthermore,
\begin{align*}
 \min_{j\in BR^{\rho}_c(\vx)} \Big((1-\varepsilon)\cdot \vx +\varepsilon \cdot \vx^\prime\Big)^\top  R\ve_j
 &= - \max_{j\in BR^{\rho}_c(\vx)} \Big((1-\varepsilon)\cdot \vx +\varepsilon \cdot \vx^\prime\Big)^\top (-R)\ve_j\\
 &\leq -(1-\varepsilon) \cdot \max_{j\in BR^{\rho}_c(\vx)} \vx^\top (-R)\ve_j = (1-\varepsilon) \cdot \min_{j\in BR^{\rho}_c(\vx)} \vx^\top R\ve_j.\\
\end{align*}
The above inequality holds since $\max_{j} \Big((1-\varepsilon)\cdot \vx +\varepsilon \cdot \vx^\prime\Big)^\top (-R)\ve_j \geq  \max_{j} \Big((1-\varepsilon)\cdot \vx\Big)^\top(-R)\ve_j = (1-\varepsilon) \cdot \max_{j\in BR^{\rho}_c(\vx)} \vx^\top  (-R) \ve_j$. Thus, these two bounds give
\begin{gather*}
-\min_{j\in \overline{BR^{\rho}_c(\vx)}}\Big((1-\varepsilon)\cdot \vx +\varepsilon \cdot \vx^\prime\Big)^\top  R\ve_j + \min_{j\in BR^{\rho}_c(\vx)} \Big((1-\varepsilon)\cdot \vx +\varepsilon \cdot \vx^\prime\Big)^\top  R\ve_j\\
\leq  -(1-\varepsilon) \cdot \min_{j\in \overline{BR^{\rho}_c(\vx)}}\vx^\top  R\ve_j - \varepsilon \cdot \min_{j\in \overline{BR^{\rho}_c(\vx)}}(\vx^\prime)^\top  R\ve_j
+(1-\varepsilon) \cdot \min_{j\in {BR^{\rho}_c(\vx)}}\vx^\top  R\ve_j \\
=-\min_{j\in \overline{BR^{\rho}_c(\vx)}}\vx^\top  R\ve_j+\min_{j\in {BR^{\rho}_c(\vx)}}\vx^\top  R\ve_j
+\varepsilon \cdot \Big(\min_{j\in \overline{BR^{\rho}_c(\vx)}}\vx^\top  R\ve_j-\min_{j\in \overline{BR^{\rho}_c(\vx)}}(\vx^\prime)^\top  R\ve_j -\min_{j\in {BR^{\rho}_c(\vx)}}\vx^\top  R\ve_j \Big).
\end{gather*}

But, by the definition of $\rho$-best-response, we have that
\begin{equation*}
\begin{split}
-\min_{j\in \overline{BR^{\rho}_c(\vx)}}\vx^\top  R\ve_j+\min_{j\in {BR^{\rho}_c(\vx)}}\vx^\top  R\ve_j < -\rho,
\end{split}
\end{equation*}
and $$\min_{j\in \overline{BR^{\rho}_c(\vx)}}\vx^\top  R\ve_j-\min_{j\in \overline{BR^{\rho}_c(\vx)}}(\vx^\prime)^\top  R\ve_j -\min_{j\in {BR^{\rho}_c(\vx)}}\vx^\top  R\ve_j \leq 1,$$
since $R_{ij} \leq 1$. In total, we have that 
\begin{equation*}
\begin{split}
\min_{j\in {BR^{\rho}_c(\vx)}}\!\vx^\top  R\ve_j-\min_{j\in \overline{BR^{\rho}_c(\vx)}}\!\vx^\top  R\ve_j
+\varepsilon \Big(\!\min_{j\in \overline{BR^{\rho}_c(\vx)}}\vx^\top  R\ve_j-\min_{j\in \overline{BR^{\rho}_c(\vx)}}(\vx^\prime)^\top  R\ve_j -\min_{j\in {BR^{\rho}_c(\vx)}}\vx^\top  R\ve_j \Big)
< -\rho + \varepsilon.
\end{split}
\end{equation*}
Thus, we only need to ensure that $-\rho + \varepsilon \leq  0$, which is true when $\varepsilon \leq \frac{\rho}{2}$.\\
    
{\hypersetup{linkcolor=black}
\section{Equivalent Formulation of \Cref{alg:find_dir}}\label{app:find_dir}
}
Recall that the linear program involved in Find Direction is the following, given a profile $(\vx, \vy)$: 
\begin{align*}
    \min\quad &\gamma \\
    \text{s.t.}\quad &\gamma \geq (\ve_i)^\top \mR\vy^\prime -(\vx^\prime)^\top \mR\ve_j  \\
    &i \in BR^{\rho}_r(\vy), j \in BR^{\rho}_c(\vx), \vx^\prime \in \Delta^{n-1}, \vy^\prime \in \Delta^{n-1} 
\end{align*}

It is easy to see that this is equivalent to:
\begin{align*}
    \min\quad &\gamma_1 + \gamma_2 \\
    \text{s.t.}\quad &\gamma_1 \geq (\ve_i)^\top \mR\vy^\prime  \\
    &\gamma_2 \ge -(\vx^\prime)^\top \mR\ve_j \\
    &i \in BR^{\rho}_r(\vy), j \in BR^{\rho}_c(\vx), \vx^\prime \in \Delta^{n-1}, \vy^\prime \in \Delta^{n-1} 
\end{align*}
Now, we see that the variables of $\vx^\prime$ appear in separate constraints from the variables of $\vy^\prime$. Hence, it suffices to solve separately the problems:

\begin{minipage}{.45\textwidth}
    \begin{align*}
    \min\quad &\gamma_1\\
    \text{s.t.}\quad &\gamma_1 \geq (\ve_i)^\top \mR\vy^\prime  \\
    &i \in BR^{\rho}_r(\vy), \vy^\prime \in \Delta^{n-1}
    \end{align*}
\end{minipage}
\begin{minipage}{.45\textwidth}
    \begin{align*}
    \min\quad &\gamma_2\\
    \text{s.t.}\quad &\gamma_2 \geq -(\vx^\prime)^\top \mR\ve_j  \\
    &j \in BR^{\rho}_c(\vx), \vx^\prime \in \Delta^{n-1}
    \end{align*}
\end{minipage}\vspace{.2cm}

With this formulation at hand we can now prove \Cref{obs:rho=1}.

\begin{proofof}[Proof of {\hypersetup{linkcolor=black}\Cref{obs:rho=1}}]

In the case where $\rho=1$ we have that $BR^{\rho}_c(\vx) =[n]$ and $BR^{\rho}_r(\vy) = [n]$. This means that there is a constraint in each LP for every pure strategy, and this means that the LPs reduce to the well known primal and dual LP formulation of computing the equilibrium strategies in zero-sum games. Note that the second LP can be written as a maximization problem (for $-\gamma_2$), which makes the primal-dual connection more apparent.\end{proofof}

\section{OGDA and Other First Order Methods}\label{app:OGDA}
The equations for the Optimistic Gradient Descent/Ascent are as follows:
\begin{align*}
\vx_{t+1} &= \vx_t - 2\alpha \nabla_{\vx} f(\vx_t,\vy_t) + \alpha \nabla_{\vx} f(\vx_{t-1},\vy_{t-1})\\
\vy_{t+1} &= \vy_t - 2\alpha \nabla_{\vy} f(\vx_t,\vy_t) + \alpha \nabla_{\vy} f(\vx_{t-1},\vy_{t-1})
\end{align*}

\noindent where for bilinear objective functions, such as zero-sum games we have $f(\vx,\vy) = x^\top \mR y$. \\

To serve as a means of comparison to our method, we implemented the OGDA algorithm, following the aforementioned equations. To verify the validity of our implementation, we present a comparison with the corresponding PyOL(Python library for Online Learning) version of the same algorithm. In the figure that follows, the algorithms are set with $\eta = 0.01$ and are tested on uniformly random games.

\begin{figure}[H]
    \centering
    \includegraphics[width=0.5\linewidth]{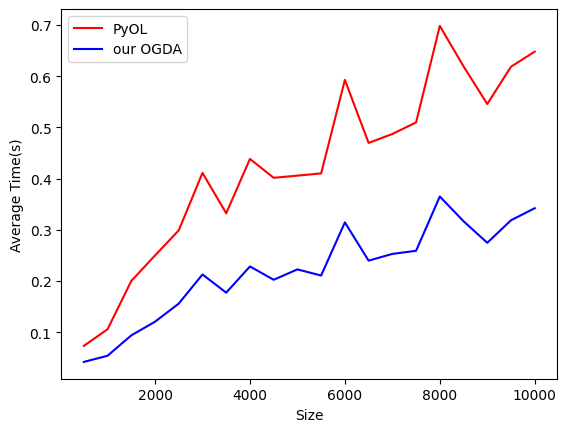}
    \caption{Time comparison between PyOL and our OGDA implementations}
    \label{fig:pyol}
\end{figure}

There exist also other first order methods, which have good theoretical guarantees. A popular such method is 
Optimistic Multiplicative Weights Update (OMWU), an optimistic variant of MWU. 
We realized however that MWU is less stable than OGDA and exhibits cyclic behavior in certain cases. 
In fact this has been also observed by other works and the recent paper by \cite{CF+24} provides further justification 
on why some first order methods may fail to be a good practical solution.

As a conclusion, we decided to present comparisons of our method only against OGDA in \Cref{sec:experiments}.

\section{Additional Experiments}\label{app:exp}
Here we present additional experiments regarding the choice of parameters and more comparisons between our variants and other methods. All the experiments here and in \Cref{sec:experiments} were run on a Macbook M1 Pro(10 core) with 16GB RAM.  We developed our code in Python 3.10.9, using the packages NumPy 2.0.2 and SciPy 1.14.1. 

\subsection{Comparison with Constant Step Size $\epsilon$}\label{app:eta}

We begin our experiments with the comparison between our Fixed Support Variant (with optimized $\varepsilon$) and the Fixed Support Variant with a constant $\varepsilon$, along the spirit of standard first-order methods using a constant learning rate. We present two comparisons in \Cref{fig:constanteta} for a random game of a size $n=5000$. In the left subplot we see the result of starting from $(\ve_1, \ve_1)$. As expected, the optimized version here is much faster, since when the duality gap is large there is a room for big steps. In addition, we also see (in the right subplot) that even for the uniform initialization, i.e. when we are close to an approximate Nash equilibrium, that the optimized version terminates faster. Given that the iterations are more expensive, we should compare the times as well. Indeed, the constant step size requires 1.8 seconds vs 1.15 for the optimized version.

Therefore, we selected to use our optimized version for the choice of $\epsilon$ as described in \Cref{sec:theory-to-impl}.

\begin{figure}[H]
    \centering
    \includegraphics[width=0.5\linewidth]{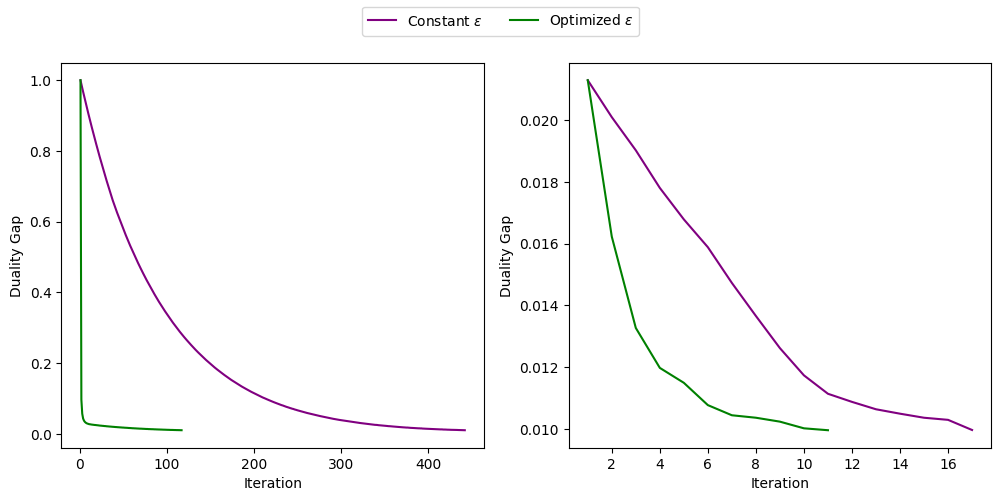}
    \caption{Iteration comparison between for the initializations $(\ve_1, \ve_1)$ (left) and the uniform point (right)}
    \label{fig:constanteta}
\end{figure}
\subsection{Tuning our Fixed Support Variant}\label{app:k}

For the use of our Fixed Support variant, recall that we used a parameter $k$ to denote the number of the top better responses that we select to include when constructing the LP constraints in \Cref{alg:find_dir}.
We tried to experiment with the appropriate size for $k$ (also referred to as the support size).
We present in \Cref{fig:ktesting} the averaged results for 20 games of size $n=10000$. As the figure reveals, the range of $k$ between $[60, 130]$ seemed to behave quite well, and as a result, we selected $k=100$. We would like to point out that the image is similar for other game sizes.

\begin{figure}[H]
    \centering
    \includegraphics[width=0.5\linewidth]{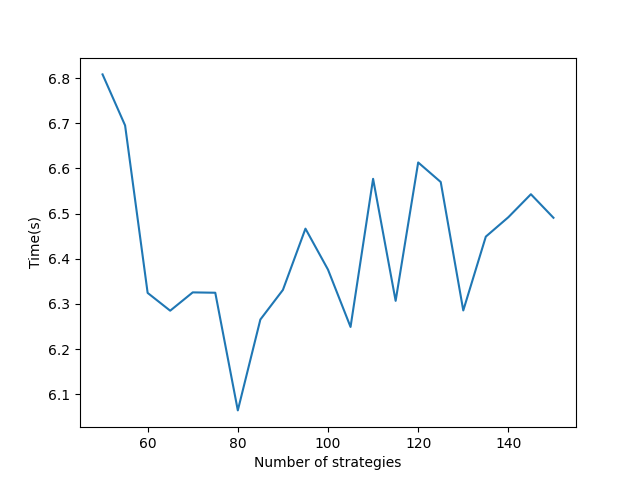}
    \caption{Running time comparison for different values of $k$}
    \label{fig:ktesting}
\end{figure}

We propose, as an interesting direction for future work, some possible modifications. One is to set $k$ as a function of $n$, i.e. $k = \log n$. Having the support size growing with $n$ is natural and we cannot rule out the possibility that for even larger games sizes, with size in the millions, $k=100$ would be too small to achieve high performance. Another approach, quite similar to one approach with $\rho$, would be to use an adaptive support size. In this case, the adaptation could be over the step size $\varepsilon$; whenever the step becomes too small, we can increase $k$ and try for larger steps in the next iterations.  
\subsection{Gaussian Random Games}\label{app:gauss}

Our next experiment concerns a different class of randomly generated games: instead of sampling from the uniform distribution we sample from the standard Gaussian, and then rescale to have the entries back in $[0,1]$. Qualitatively, we observed no difference between our methods; the Fixed Support variant, with the same parameters as in the uniform game generation, was the winner among all our variants, as before. Therefore, here we only show a comparison between our Fixed Support variant and OGDA, in \Cref{fig:lpogda_gauss}. Our findings are consistent with what we presented in \Cref{sec:experiments}. Additionally here we notice that both methods are faster, compared to the times with the uniform distribution. Our method matches and outperforms OGDA at an earlier range of game sizes (at around $n = 5000$ as opposed to $n=18000$ for the uniformly random games). 

\begin{figure}[H]
    \centering
    \includegraphics[width=0.5\linewidth]{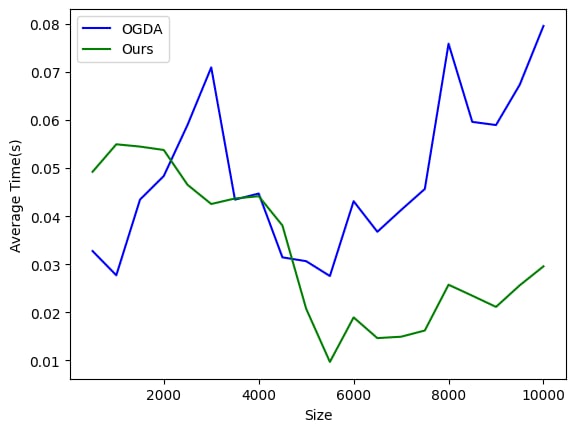}
    \caption{Comparison with Gaussian Random Games}
    \label{fig:lpogda_gauss}
\end{figure}

\subsection{Low Rank Zero-sum Games}\label{app:low-rank}

We also experimented with a class of games where the payoff entries are not all drawn independently from each other, which translates to a lower than full rank payoff matrix.
To generate a matrix $\mR$ of size $n$ and fixed rank $r$, we uniformly sample matrices $\mU, \mV$ of dimensions $n \times r$. Obtaining $\mR$ via $\mR = \mU\cdot \mV^\top$ gives us a matrix of desired rank $r$. 

This class of games provided new insights for our method. First, we noticed that both our Fixed Support Variant and OGDA were getting stuck and/or significantly slowed down to lower accuracies. To amend this issue for OGDA, we experimented with different values of constant learning rates but this was to no avail. Then, we turned to the square rooted learning rate option of PyOL library which helped accelerate OGDA to some degree. 
On the other hand, for our method there was a parameter we had not paid much attention to earlier, that needed tweaking: the tolerance parameter of the LP solver in \Cref{alg:find_dir}. Once using a lower tolerance, i.e. asking for higher accuracy solutions to our small LPs the speed of our method improves drastically! 
The result for matrices of rank 10 is depicted in \Cref{fig:low-rank} with the picture being similar for higher ranks as well. In particular, we observe that we outperform OGDA as soon as the game size exceed 1000 strategies.

\begin{figure}[H]
    \centering
    \includegraphics[width=0.5\linewidth]{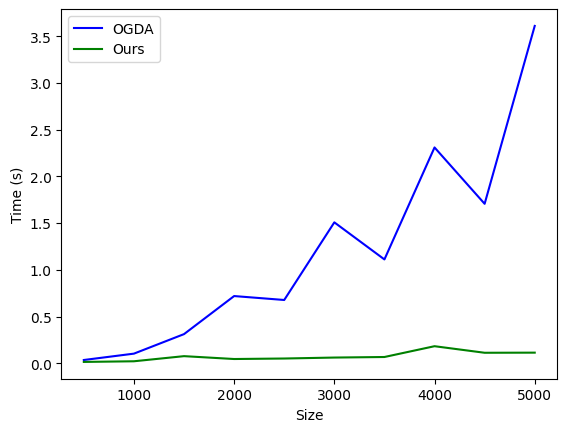}
    \caption{Comparison for fixed rank $r=10$}
    \label{fig:low-rank}
\end{figure}

\end{document}